\newcommand{\pos}{y} 
\newcommand{\inp}{r} 
\newcommand{\totalInput}{e}
\newcommand{\dreference}{d_{\mathrm{ref}}} 
\newcommand{\distPos}{\Delta} 
\newcommand{\contNumCoef}{q}
\newcommand{\contDenCoef}{p}
\newcommand{\vehNumCoef}{b}
\newcommand{\vehDenCoef}{a}
\newcommand{\numInteg}{\eta} 
\newcommand{\vehicleTf}{G}
\newcommand{\controllerTf}{R}
\newcommand{\openLoop}{M}
\newcommand{\openLoopPart}{\overline{\openLoop}}
\newcommand{\wb}{\epsilon}
\newcommand{\wbmax}{\wb_{\max}}
\newcommand{\numVeh}{N}
\newcommand{\lapl}{{L}}
\newcommand{\redLapl}{{\lapl_\mathrm{r}}}
\newcommand{\redLaplCO}{\hat{\lapl}}
\newcommand{\spatEig}{\lambda} 
\newcommand{\spatEigMin}{\spatEig_{\min}}
\newcommand{\spatEigMax}{\spatEig_{\max}}
\newcommand{\spatEigZ}{\gamma}
\newcommand{\contNode}{C}
\newcommand{\obsvNode}{O}
\newcommand{\distance}{d}
\newcommand{\distanceCO}{\distance_{CO}}
\newcommand{\weight}{w}
\newcommand{\weightCO}{\weight_{CO}}
\newcommand{\diagTransBlockEig}{T}
\newcommand{\diagTransBlockEigZ}{Z}
\newcommand{\diagTransBlockEigZij}{Z_{ij}}
\newcommand{\diagTransBlockEigWn}{\diagTransBlockEig_{j}}
\newcommand{\diagTransBlockEigMin}{\diagTransBlockEig_{{\min}}}
\newcommand{\diagTransBlockEigMax}{\diagTransBlockEig_{{\max}}}
\newcommand{\diagTransBlockCO}{\diagTransBlockEig_{CO}}
\newcommand{\setZi}{\mathcal{J}}
\newcommand{\tranFunCo}{\diagTransBlockCO}
\newcommand{\tranFunDist}{\diagTransBlockEig_\distPos}
\newcommand{\hinfnorm}{\mathcal{H}_{\infty}}
\tikzstyle{genGraphNode} = [minimum size=3mm, thick, node
\tikzset{
    >=stealth', bend angle=10 }
\tikzstyle{normalNode} = [genGraphNode, circle,draw=black,fill=black!10,thick]
\tikzstyle{controllingNode} = [genGraphNode,
\tikzstyle{observingNode} =[genGraphNode, circle,draw=black!50,fill=white,thick]
\tikzstyle{placeholder}=[genGraphNode, circle]
\tikzstyle{block} = [draw, fill=white!20, rectangle, 
\tikzstyle{gain} = [draw,regular polygon,
\tikzstyle{gainLeft} = [gain, shape border rotate=-90]
\tikzstyle{gainRight} = [gain, shape border rotate=90]
\tikzstyle{gainDown} = [gain, shape border rotate=180]
\tikzstyle{gainUp} = [gain, shape border rotate=90]
\tikzstyle{sum} = [draw, fill=white!20, circle,inner sep=1pt]
\tikzstyle{input} = [coordinate]
\tikzstyle{output} = [coordinate]
\tikzstyle{pinstyle} = [pin edge={to-,thin,black}]
\tikzstyle{junction} = [draw, circle, minimum height=0.02em, fill=black, inner sep=0pt]
\newtheoremstyle{assump}
{1pt}
{1pt}
{}
{}
{}
{)}
{-1.8em}
{}
\theoremstyle{plain}
\newtheorem{theorem}{Theorem}
\newtheorem{lemma}{Lemma}
\newtheorem{corollary}{Corollary}
\theoremstyle{definition}
\newtheorem{definition}{Definition}
\newtheorem{remark}{Remark}
\theoremstyle{assump}
\begin{document}

\title{Scaling in bidirectional platoons with dynamic controllers and proportional asymmetry}
%

        \author{Ivo~Herman, Dan~Martinec,
        Zden\v{e}k~Hur\'{a}k and Michael Sebek
\thanks{The authors are with the Faculty of Electrical Engineering (the first two are doctoral students), Czech Technical University in Prague. E-mail: ivo.herman@fel.cvut.cz. Supported by Czech Science Foundation within GACR 13-06894S (I.~H.).  
}}

\markboth{IEEE Transactions on Automatic Control}%
{Herman \MakeLowercase{\textit{et al.}}: Scaling of transfer functions in platoons}

\maketitle
\begin{abstract}   
We consider platoons composed of identical vehicles with an asymmetric nearest-neighbor interaction. We restrict ourselves to intervehicular coupling realized with dynamic arbitrary-order onboard controllers such that the coupling to the immediately preceding vehicle is proportional to the coupling to the immediately following vehicle. Each vehicle is modeled using a transfer function and we impose no restriction on the order of the vehicle. The platoon is described by a transfer function in a convenient product form. We investigate how the H-infinity norm and the steady-state gain of the platoon scale with the number of vehicles. We conclude that if the open-loop transfer function of the vehicle contains two or more integrators and the Fiedler eigenvalue of the graph Laplacian is uniformly bounded from below, the norm scales exponentially with the growing distance in the graph. If there is just one integrator in the open loop, we give a condition under which the norm of the transfer function is bounded by its steady-state gain---the platoon is string-stable. Moreover, we argue that in this case it is always possible to design a controller for the extreme asymmetry---the predecessor following strategy.
\end{abstract} 
\begin{IEEEkeywords}
Vehicular platoon, string stability, asymmetric control, scaling, transfer functions.
\end{IEEEkeywords}
\section{Introduction}
Vehicular platoons are chains of automatic cars that are supposed to travel with tight spacing in a highway lane. They are expected to increase the safety and capacity of highways. A number of theoretical results are available in the literature, but experiments with short vehicular platoons were described too \cite{Milanes2014} (PATH project) or \cite{Coelingh2012} (SARTRE project). Majority of the practical results rely on intervehicular communication. The most commonly adopted approaches are \textit{Cooperative Adaptive Cruise Control} (CACC) \cite{Milanes2014, Naus2010}, \textit{leader following}\cite{Peters2013} and \textit{leader's velocity transmission} \cite{Hao2012, Barooah2009a}. However, the communication can be delayed, disturbed or even denied by an intruder. 

In the absence of intervehicular communication, the only available information
is the one measured by the onboard sensors, especially the intervehicular
distances. It turns out that certain properties of such platoons need not scale
well for a growing number of vehicles. Among the strategies, the
\emph{time-headway policy} is scalable \cite{Middleton2010} but the platoon's length grows with the speed of the
leader. Among fixed-distance approaches such as the  \textit{predecessor
following} and \textit{symmetric or asymmetric bidirectional control}, an
unpleasant phenomenon known as \textit{string instability} can occur. This means
that a disturbance affecting a given vehicle can be amplified as it propagates
along the platoon (string) of vehicles. For the predecessor following strategy,
string instability  occurs for an arbitrary model of a vehicle as long as there
are least two integrators in the open loop \cite{Seiler2004a}. If
measurements of the distance from both the immediately preceding and the
immediately following vehicles are available, we call
the corresponding control bidirectional. In this paper we are going to
revolve around the role of asymmetry of bidirectional coupling.

Recent works suggest that in a bidirectional platoon with second-order open-loop
dynamics, a good trade-off between the settling time and peaks in the transient
response can be achieved if the asymmetry of coupling is imposed differently on
the measured intervehicular distances and their first derivatives---relative velocities. However, these
results are only obtained by numerical simulations \cite{Hao2012c}
or the results are based on reasonable conjectures \cite{Cantos2014a}. Moreover,
they are valid only for particular system models---double integrators. No general knowledge is available
so far. 

In contrast, if the coupling assumes identical asymmetry for both the distances
and their first derivatives, a nonzero lower bound on the formation eigenvalues
can be achieved \cite{Hao2012}. This guarantees controllability
\cite{Barooah2009a} of the formation of an arbitrary size. On the other hand,
for a double integrator model, the $\hinfnorm$ norm of a particular transfer
function related to disturbance attenuation grows exponentially in the number of
vehicles \cite{Tangerman2012}. Later this bad scaling was attributed to the
presence of the uniform bound on eigenvalues if there are at least two
integrators in the open loop \cite{Herman2013b}. Hence, the uniform boundedness
of eigenvalues plausible from the perspective of faster transient response must
be paid for by very bad scaling in the frequency domain.

If symmetric coupling is implemented, the norm grows only linearly \cite{Veerman2007, Hao2012b} but the step response suffers from very long transients---the eigenvalues get arbitrarily close to the origin. This can be alleviated using a wave-absorbing controller implemented on either end of the platoon \cite{Martinec2014b}. 
Finally, it is also the sensitivity of the platoon to the noise that depends on the number of integrators in the open loop \cite{Bamieh2012}.

In this paper we consider platoons composed of identical vehicles with an
asymmetric nearest-neighbor interaction. We restrict ourselves to the
case when the coupling to the immediately preceding vehicle is proportional to
the coupling to the immediately following vehicle (see eq. (\ref{eq:regErr})).
Each vehicle is modeled by a transfer function and we impose no restriction on
the order or structure of the model. 

We investigate how the $\hinfnorm$ norm and the steady-state gain of the platoon
scale with the number of vehicles. If the vehicle contains two or more
integrators and the eigenvalues of the graph Laplacian are uniformly bounded
from below, the norm scales exponentially with the growing distance in the
graph (Sec. \ref{sec:exponentialScaling}). If there is just one integrator in
the open loop, we give a condition under which the norm of the transfer function is bounded by its steady-state
gain---the platoon is string-stable (Sec. \ref{sec:stringStabCont}). In addition,
in this case it is possible to design a string-stable controller for the extreme asymmetry---the predecessor
following strategy, which offers some implementation advantages compared to general
asymmetric bidirectional control (see Sec. \ref{sec:prefFolDesign}).

The novelty is that our results hold for an \emph{arbitrary LTI model} (order
and structure) of the individual vehicle. Thus, we do not limit ourselves to a
single or double integrator as in \cite{Barooah2009a, Hao2012, Tangerman2012,
Bamieh2012, Hao2012c}. In fact, our work generalizes those results to arbitrary
transfer function models of individual vehicles. The main distinguishing feature is
the number of integrators in the open loop. We extend the result on exponential scaling from our paper
\cite{Herman2013b} to an arbitrary transfer function in the formation. Moreover, we add a discussion of scaling when only one integrator in the open loop is present in the agent model and also a steady-state gain is analyzed. This paper therefore
should give a broader qualitative overview of what is achievable with proportional
asymmetry for general vehicle models.


\section{Vehicle and platoon modelling}
Consider $\numVeh$ \emph{identical} vehicles indexed as $i=1, 2, \ldots, \numVeh$, with $i=1$ corresponding to the platoon leader. The leader drives independently of the platoon. The vehicles have identical transfer functions $\vehicleTf(s)=\frac{\vehNumCoef(s)}{\vehDenCoef(s)}$  of an arbitrary type and order with positions $\pos_i$ as the outputs. The input to the vehicle is produced by a dynamic controller $\controllerTf(s)=\frac{\contNumCoef(s)}{\contDenCoef(s)}$.
The open-loop model  $\openLoop(s)=\controllerTf(s) \vehicleTf(s) = \frac{\vehNumCoef(s)\contNumCoef(s)}{\vehDenCoef(s)\contDenCoef(s)}$ is a series connection of the controller and the vehicle models. 

\begin{definition}[Number of integrators in the open loop] 
Let the open-loop model be factored as
$\openLoop(s)=1/s^\numInteg\,\, \openLoopPart(s)$ with $\openLoopPart(0) < \infty$.
Then $\numInteg \in \mathbb{N}_0$ is the number of
integrators in the open loop.  
\end{definition}
The number $\numInteg$ is also known as \emph{type number} of the system. For instance, 
the model $\openLoop(s)=\frac{1}{s(s+a)}$ is a system
with one integrator in the open loop and $\openLoop(s)=\frac{s+1}{s^2
(s+b)}$ has $\numInteg=2$.
We call the well-known cases with $\openLoopPart(s)=1$ a
single-integrator system for $\numInteg=1$ and a double-integrator system for
$\numInteg=2$, respectively.

The input to the controller is the combined front and rear intervehicular spacing error
\begin{equation}
	\totalInput_i = (\pos_{i-1}-\pos_i) - \wb_i
	(\pos_{i}-\pos_{i+1}) + \inp_i. \label{eq:regErr}
\end{equation}
We call the nonnegative weight $\wb_i$ of the rear spacing error the 
\textit{constant of bidirectionality}. The general external input $\inp_i$
can represent, for instance, a measurement noise or a reference
such as the reference distance ${\dreference}$. In such a case $\inp_i =
-\dreference + \wb_i \dreference$ and the distances $\Delta_i=\pos_{i-1}-\pos_{i}$ are regulated to $\dreference$. The leader's control input is just $\inp_1$ and the controller of the trailing vehicle has the input $\totalInput_\numVeh=(\pos_{\numVeh-1}-\pos_\numVeh) + \inp_\numVeh$. Since we use a dynamic controller, the control law can also access the relative velocity and other derivatives of the distances (for instance by using a PD controller $\controllerTf(s)=\alpha s + \beta$).

\subsection{Laplacian properties}
The regulation errors in (\ref{eq:regErr}) are given in a vector form as
$\totalInput = -\lapl \pos + \inp$ with $\totalInput=[\totalInput_1, \ldots,
\totalInput_\numVeh]^T, \pos=[\pos_1, \ldots, \pos_\numVeh]^T$ and $\inp=[\inp_1, \ldots, \inp_\numVeh]^T$. The matrix $\lapl=[l_{ij}] \in \mathbb{R}^{\numVeh \times \numVeh}$ is the Laplacian of a path graph and has the following structure
{\small
\begin{equation}
	\lapl = \left[  
				\begin{matrix}
					0 & 0 & 0 & 0 & \ldots \\
					-1  & 1+\wb_2 & -\wb_2 & 0 & \ldots \\
					\vdots  & \vdots & \vdots & \ddots & \vdots \\
					 0 & \ldots & -1  & 1+\wb_{\numVeh-1} & -\wb_{\numVeh-1}\\
					0 & \ldots  & 0&  -1 & 1
				\end{matrix} 
			\right]
			\label{eq:laplacianLeader}
\end{equation}
}It is a non-symmetric tridiagonal matrix. Next we state some useful properties of $\lapl$, mainly taken from the literature.
\begin{lemma}
Laplacian $\lapl$ in (\ref{eq:laplacianLeader}) and its eigenvalues $\spatEig_i$ have the following properties:
	\begin{itemize}
	  \item[a)] The eigenvalues $\spatEig_i$ are all real and $\spatEig_i \geq 0,
	  \,\, \forall i$.
	  \item[b)] With the eigenvalues ordered as $\spatEig_1 \leq \spatEig_2 \leq \ldots \leq \spatEig_\numVeh$, the smallest eigenvalue $\spatEig_1=0$ and this eigenvalue is simple.
	  \item[c)] The eigenvalues are upper-bounded by $\spatEigMax$, that is, $\spatEig_i \leq \spatEigMax \leq 2 \max (l_{ii})$.
	  \item[d)] Let $\redLapl$ be the matrix obtained from $\lapl$ by deleting the first row and the first column (both correspond to the leader). Then $\spatEig_i(\lapl)=\spatEig_i(\redLapl)$ for all $\spatEig_i \neq 0$.
	  \item[e)] Suppose that $\wb_i \leq \wbmax < 1 \, \forall \, i$. Then the nonzero eigenvalues $\spatEig_2, \ldots, \spatEig_\numVeh$ are upper-bounded by $\spatEig_i \leq \spatEigMax = 2(1+\wbmax), \forall i \geq 1$ and lower-bounded by
	\begin{equation}
		\spatEig_i \geq \spatEigMin \geq
		\frac{1}{2}\frac{(1-\epsilon_{\max})^2}{1+\epsilon_{\max}}>0, \qquad\forall i \geq 2.
		\label{eq:uniformBound}
	\end{equation}
	The bounds are uniform, that is, they do not depend on $\numVeh$. 
	\item[f)] Let $\lapl_k$ be a matrix obtained from $\lapl$ by deleting $k$th row and column. Let the eigenvalues of $\lapl_k$, $1 < k < n$, be $\mu_1 < \mu_2 < \ldots < \mu_{n-1}$. Then
	\begin{equation}
		\lambda_{j+2} \geq \mu_{j} \geq \lambda_{j},  \, j=1,2, \ldots, N-2.
		\label{eq:eigenvalueInterlacing}
	\end{equation}
	\end{itemize}
	\label{lem:laplProp}
\end{lemma}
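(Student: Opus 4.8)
The plan is to route everything through the reduced matrix $\redLapl$, exploiting two structural facts: the zero first row of $\lapl$ makes it block lower triangular, and $\redLapl$ is diagonally similar to a symmetric matrix. First I would expand $\det(\lapl-\spatEig\Id)$ along its (zero) first row to obtain $\det(\lapl-\spatEig\Id)=(-\spatEig)\,\det(\redLapl-\spatEig\Id)$, so that $\operatorname{spec}(\lapl)=\{0\}\cup\operatorname{spec}(\redLapl)$, and record $\lapl\oneVect=0$; once $\redLapl$ is shown nonsingular this is precisely item~d) and gives the simplicity of $\spatEig_1=0$ in b). I would carry out the remaining items assuming $\wb_i>0$; the degenerate cases $\wb_i=0$ reduce to this either because the same determinant expansion splits $\redLapl$ into block-triangular pieces of the same ``path-with-deleted-leader'' type (argued blockwise), or, more economically, by continuity of the eigenvalues in $(\wb_2,\dots,\wb_{\numVeh-1})$, since every inequality in the lemma is a closed condition.

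For a), b), c), d): when $\wb_i>0$ the tridiagonal $\redLapl$ satisfies $\redLapl_{k,k+1}\redLapl_{k+1,k}=\wb_{k+1}>0$, so a positive diagonal scaling $D$ makes $S:=D^{-1}\redLapl D$ symmetric (with off-diagonal entries $-\sqrt{\wb_{k+1}}$); hence the eigenvalues of $\redLapl$, and therefore of $\lapl$, are real. Moreover $\redLapl$ has nonpositive off-diagonal entries, is irreducible, is weakly row-diagonally dominant, and is strictly dominant in its first row (the $1$ in that diagonal entry being the weight of the coupling to the leader, whose row and column have been deleted), hence a nonsingular $M$-matrix; its eigenvalues therefore have positive real part and, being real, are positive. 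With the reduction this proves a), b), d). Item c) is Ger\v{s}gorin's theorem: the disc of row $1$ is $\{0\}$, while each row $i\ge2$ has absolute off-diagonal sum equal to $l_{ii}$ and hence contributes the interval $[0,2l_{ii}]$; the union gives $0\le\spatEig_i\le 2\max_i l_{ii}$, i.e.\ $\spatEigMax\le 2\max_i(l_{ii})$.

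For item e) the upper bound $\spatEigMax\le 2\max_i l_{ii}\le 2(1+\wbmax)$ is immediate from c). For the lower bound I would work with $S$, whose quadratic form is $x^{T}Sx=x_1^{2}+\sum_{k=1}^{\numVeh-2}(\sqrt{\wb_{k+1}}\,x_k-x_{k+1})^{2}$, and, for a trial constant $c$, write $S-c\Id=\sum_{k=1}^{\numVeh-2}(\alpha_k\canonVect_k-\beta_k\canonVect_{k+1})(\alpha_k\canonVect_k-\beta_k\canonVect_{k+1})^{T}+\Lambda$ with $\alpha_k\beta_k=\sqrt{\wb_{k+1}}$ (so the off-diagonals match) and $\Lambda$ diagonal. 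Taking $\alpha_k^{2}=\wb_{k+1}/\tau$ and $\beta_k^{2}=\tau$ with a single parameter $\tau>0$, the diagonal entries of $\Lambda$ are $1+\wb_{k+1}-c-\wb_{k+1}/\tau-\tau$ in the interior (and $1+\wb_2-c-\wb_2/\tau$, $1-c-\tau$ at the two ends), and these are all nonnegative for every $\wb_i\in(0,\wbmax]$ precisely when $c\le(1-\tau)+\wbmax(1-1/\tau)$; maximizing the right-hand side over $\tau$ gives $\tau=\sqrt{\wbmax}$ and $S\succeq(1-\sqrt{\wbmax})^{2}\Id$, whence
\[
\spatEigMin\ \ge\ (1-\sqrt{\wbmax})^{2}\ =\ \frac{(1-\wbmax)^{2}}{(1+\sqrt{\wbmax})^{2}}\ \ge\ \tfrac12\,\frac{(1-\wbmax)^{2}}{1+\wbmax},
\]
which is (\ref{eq:uniformBound}), the constant depending only on $\wbmax$; alternatively the uniform lower bound can be quoted from \cite{Hao2012}. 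Finally, for item f), deleting row and column $k$ with $1<k<\numVeh$ again leaves the first row of $\lapl$ zero, so $\lapl_k$ is block triangular with non-leader block $(\redLapl)_{k-1}$ (the principal submatrix of $\redLapl$ with index $k-1$ removed) and $\operatorname{spec}(\lapl_k)=\{0\}\cup\operatorname{spec}\bigl((\redLapl)_{k-1}\bigr)$; since $(\redLapl)_{k-1}$ is diagonally similar to the principal submatrix $S_{k-1}$ of the symmetric $S$, Cauchy's interlacing theorem for $S$ and $S_{k-1}$ yields $\spatEig_i\le\mu_i\le\spatEig_{i+1}$ for the ordered eigenvalues, which contains the claimed $\spatEig_{j+2}\ge\mu_j\ge\spatEig_j$.

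The hard part is the lower bound in e). Crude estimates---Ger\v{s}gorin applied to $\redLapl$, or completing the square with a fixed split---only deliver a positive, $\numVeh$-independent constant when $\wbmax<\tfrac12$, whereas the claim must persist all the way to $\wbmax\to1$; it is the $\tau$-dependent sharing of the diagonal mass between consecutive rank-one ``edge'' forms, together with the optimal choice $\tau=\sqrt{\wbmax}$, that makes the bound survive on the whole range $0\le\wbmax<1$. Everything else---the block-triangular structure, the symmetrizability of $\redLapl$, Ger\v{s}gorin, Cauchy interlacing---is routine.
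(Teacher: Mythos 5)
Your proof is correct, and it takes a genuinely different, essentially self-contained route: the paper disposes of a)--e) by citation to \cite{Herman2013b} and proves f) via the interlacing theorem for totally nonnegative matrices (after a signature similarity), whereas you derive everything from two structural observations --- the zero first row of $\lapl$ splits off the eigenvalue $0$ and yields d) directly, and for $\wb_i>0$ the reduced matrix $\redLapl$ is similar, via a positive diagonal scaling, to a symmetric tridiagonal $S$. That symmetrization is your workhorse: it gives realness, lets the $M$-matrix and Ger\v{s}gorin arguments settle a)--c), turns f) into Cauchy interlacing for $S$ and its principal submatrix, and reduces e) to bounding $\lambda_{\min}(S)$ by a sum-of-squares decomposition with the optimized split $\tau=\sqrt{\wbmax}$. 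Two of your conclusions are in fact sharper than what the lemma asserts: the bound $(1-\sqrt{\wbmax})^2=(1-\wbmax)^2/(1+\sqrt{\wbmax})^2$ dominates $\tfrac12(1-\wbmax)^2/(1+\wbmax)$ because $(1+\sqrt{\wbmax})^2\le 2(1+\wbmax)$, and Cauchy gives the tight interlacing $\spatEig_j\le\mu_j\le\spatEig_{j+1}$, which implies the looser two-sided bound (\ref{eq:eigenvalueInterlacing}) that the totally-nonnegative route delivers; the price is that your argument uses the symmetrizability specific to this proportional-asymmetry structure, while the paper's citation of \cite[Thm.~5.5.6]{Fallat2011} would survive in settings where only total nonnegativity is available. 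One small caution: simplicity of the zero eigenvalue in b) is not a closed condition, so for configurations with some $\wb_i=0$ you cannot rely on the continuity shortcut for that item and must use your block-triangular alternative (each diagonal block of $\redLapl$ is again an irreducible, weakly diagonally dominant $Z$-matrix with a strictly dominant row, hence nonsingular); since you offer that argument explicitly, the proof stands.
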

\begin{proof}
The properties a)-d) are discussed in \cite[Lem. 1]{Herman2013b}, e) is proved
in \cite[Thm. 1]{Herman2013b}. The statement f) follows from \cite[Thm.
5.5.6]{Fallat2011}, which gives conditions of interlacing for totally
nonnegative matrices. $\lapl$ is similar to a totally nonnegative matrix
\cite[pp. 6,7]{Fallat2011}. Both $\lapl$ and $\lapl_k$ can be
transformed to totally nonnegative matrices using similarity transform with
signature matrices $S=\text{diag}[1, -1, \ldots, 1, -1]$. The results are
$|\lapl|$ and $|\lapl_k|$ with the absolute values taken element-wise. Since
$|\lapl_k|$ is a principal submatrix of $|\lapl|$, interlacing occurs. Since
$\lapl$ is similar to $|\lapl|$ and $\lapl_k$ to $|\lapl_k|$, their
eigenvalues interlace.
\end{proof}

The property e) is an instance of uniform boundedness---the lower bound on
eigenvalues $\spatEigMin>0$ does not depend on $\numVeh$ \cite{Hao2012,
Tangerman2012, Herman2013b}. Applying f) repeatedly, the interlacing holds for
any principal submatrix. The eigenvalue $\spatEig_2$ is known as the Fiedler eigenvalue.
\begin{remark}
	 In \cite{Herman2013b} we considered a more general model with different controller weight $\mu_i$ for each vehicle such that $\lapl_\mu=W \lapl$, $W=\text{diag}[\mu_1, \mu_2, \ldots, \mu_\numVeh]$. For the clarity of presentation we restricted ourselves here to $\lapl$ in (\ref{eq:laplacianLeader}) and $\mu_i=1 \, \forall i$, although all the results (apart from the steady-state gain) would remain unchanged.
\end{remark}

\subsection{Transfer functions}
\begin{figure}
\centering
	\includegraphics[width=0.45\textwidth]{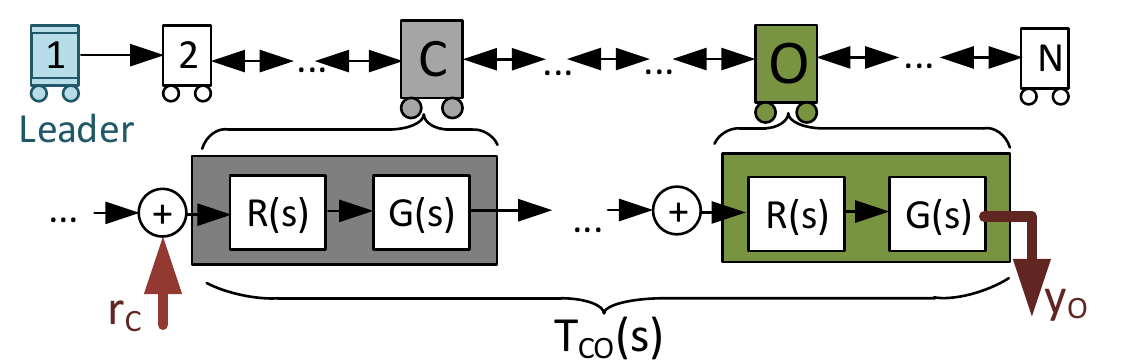}
	\caption{Block diagram showing the transfer function $\tranFunCo(s)$.}
	\label{fig:tranFun} 
\end{figure} 

We are interested in how the vector of external inputs $\inp$ (acting at the inputs of the controller) affects the vector of positions $\pos$ of vehicles. This is in general described by a transfer function matrix $y(s)=\mathbf{T}(s) r(s)$. The $(\obsvNode, \contNode)$th element of matrix $\mathbf{T}(s)$ is denoted by $\tranFunCo(s)=\frac{\pos_\obsvNode(s)}{\inp_\contNode(s)}$, $\contNode=1, \ldots, N, \obsvNode=1, \ldots, N$. The transfer function $\tranFunCo(s)$ therefore describes the effect of the external input $\inp_\contNode$ acting at a
vehicle indexed $\contNode$ (called a \emph{control vehicle}) on the
position $\pos_\obsvNode$ of the vehicle with an index $\obsvNode$ (called
an \emph{output vehicle})---see Fig. \ref{fig:tranFun}.
We will be interested in how its $\hinfnorm$ norm defined as $\|
\tranFunCo(s)\|_\infty=\sup_{\omega \geq 0} |\tranFunCo(\jmath \omega)|$ scales
with a growing number $\numVeh$ of vehicles and the distance $\distanceCO$ in a
graph. We use
the statement ''from $\contNode$ to $\obsvNode$'' with the meaning of ''from the
input $\inp_\contNode$ of the vehicle $\contNode$ to the output $\pos_\obsvNode$
of the vehicle $\obsvNode$''.
The indices $\contNode$ and $\obsvNode$ can be chosen arbitrarily. Note that due to bidirectional architecture, for any selection of $\contNode, \obsvNode$ the transfer function $\tranFunCo(s)$ depends on the whole formation.

Since the graph of a platoon is a path graph, there is only one directed
path from the node $\contNode$ to the node $\obsvNode$. This path is a sequence
of edges with the weights $\weight_{i,j}$. The weight of the path is $\weightCO
= \prod_{j=\contNode}^{\obsvNode-1} \weight_{j,j+1}$. In our case
$\weight_{i, i+1}=1$ and $\weight_{i+1, i}=\wb_i$, so
\begin{equation}
	\weightCO = \begin{cases}
		1 & \text{ for } \contNode \leq \obsvNode, \\
		\prod_{i=\obsvNode}^{\contNode-1} \wb_i & \text{ for } \contNode > \obsvNode.
	\end{cases}
	\label{eq:pathWeight}
\end{equation}
The number of edges on the directed path from the node $\contNode$ to the node $\obsvNode$ is called the graph distance $\distanceCO$ between $\contNode$ and $\obsvNode$. We use the following product form of $\tranFunCo(s)$ that we derived in \cite[Thm. 5]{Herman2014a} 
\begin{equation}
		\diagTransBlockCO(s)=\weightCO \frac{\left[ \vehNumCoef(s)\contNumCoef(s)
		\right]^{\distanceCO\!+\!1}
		{{\prod_{i=1}^{\numVeh\!-\!\distanceCO\!-\!1}}}{[\vehDenCoef(}s)
		\contDenCoef(s)\!+\!\spatEigZ_i \vehNumCoef(s)\contNumCoef(s)]
		}{\prod_{i=1}^{\numVeh} [\vehDenCoef(s)\contDenCoef(s) +
		\spatEig_j \vehNumCoef(s)\contNumCoef(s)]},
		\label{eq:tranFunGen} 
	\end{equation}  
where $\spatEig_j$ is the $j$th eigenvalue of $\lapl$. The coefficients
$\spatEigZ_i \in \mathbb{R}$, $\spatEigZ_i\leq \spatEigZ_{i+1}$, are the
eigenvalues of the matrix $\redLaplCO \in  \mathbb{R}^{\numVeh-\distanceCO-1
\times \numVeh-\distanceCO-1}$ that is obtained from $\lapl$ by deleting all the
rows and columns corresponding to the nodes on the path from $\contNode$ to
$\obsvNode$, see \cite[Thm. 10]{Herman2014a}. Note that $\redLaplCO$ is a
principal submatrix of $\lapl$, hence interlacing in the sense of Lemma
\ref{lem:laplProp} f) holds. For instance, for a formation with $\contNode=3$,
$\obsvNode=4$ and $\numVeh=5$, we delete the third and the fourth rows and
columns of $\lapl$ to get $\redLaplCO$ with the eigenvalues
$\gamma_i=[0, 1, 1+\wb_2]$,

{\small \arraycolsep=1.6pt
\begin{IEEEeqnarray}{rCl}
	\lapl &=& \begin{bmatrix}
					0 & 0 & 0 & 0 & 0 \\
					-1 & 1+\wb_2 & - \wb_2 & 0 & 0 \\ 
					0 & -1 & 1 + \wb_3 & -\wb_3 & 0 \\
					0 & 0 & -1 & 1+\wb_4 & -\wb_4 \\
					0 & 0 & 0 & -1 & 1
				\end{bmatrix} 
				\Longrightarrow 
				\redLaplCO = \begin{bmatrix}
					0 & 0 & 0 \\
					-1 & 1+\wb_2 &  0 \\ 
					0 & 0 & 1
				\end{bmatrix}.
\end{IEEEeqnarray}
\normalsize
}

Using the statement d) in Lemma \ref{lem:laplProp}, we can exclude the leader
from the formation (and also get rid of $\spatEig_1=0$ and $\spatEigZ_1=0$). Whenever we analyze a
transfer-function norm, we will work with $\redLapl$ and all the indices will
start from 2. The leader can be again included afterwards by multiplying the
transfer function $\tranFunCo(s)$ by $\openLoop(s)$.

\begin{remark}
Throughout the paper we assume that the overall system is asymptotically stable
for all $\numVeh$. It follows from (\ref{eq:tranFunGen}) that the polynomial
$\vehDenCoef(s)\contDenCoef(s) + \spatEig_j \vehNumCoef(s)\contNumCoef(s)$ must
be stable for any $\spatEig_j \in [\spatEigMin, \spatEigMax]$, $\spatEig_j \in
\mathbb{R}$ (similarly to \cite{Fax2004a}). Note that $\vehDenCoef(s)\contDenCoef(s)+\spatEig_j
\vehNumCoef(s)\contNumCoef(s)$ is a standard form for the denominator in the root-locus theory for the system $\spatEig_j \openLoop(s)$ with the gain $\spatEig_j$. Thus, we just need to stabilize the single-agent system $\spatEig_j \openLoop(s)$ for a bounded interval of the real gain $\spatEig_j \in [\spatEigMin, \spatEigMax]$. If $\spatEigMin > 0$, we can stabilize even a formation of unstable agents. From (\ref{eq:eigenvalueInterlacing}) it follows that also
$\spatEigZ_i \in [\spatEigMin, \spatEigMax], \, \forall i$, so if the system is asymptotically stable, all its zeros are in the left half-plane too.
\end{remark}

\section{Steady-state gain of transfer functions}
Besides the $\mathcal{H}_\infty$ norm, another important control-related
characteristic of a platoon is the steady-state gain $\tranFunCo(0)$. By the internal model principle \cite{Wieland2011} we assume
that $\numInteg\geq 1$ to enable the vehicles to track the leader's constant velocity.
With at least one integrator in $\openLoop(s)$ we get
$\vehDenCoef(0)\contDenCoef(0)=0$. After excluding the leader, the steady-state
gain follows from (\ref{eq:tranFunGen})
as
	\begin{IEEEeqnarray}{rCl}
	\diagTransBlockCO(0) &=&
	\weightCO \frac{\left[ \vehNumCoef(0)\contNumCoef(0) 
		\right]^{\distanceCO+1}
		{\prod_{i=2}^{\numVeh-\distanceCO-1}}{[\spatEigZ_i}
		\vehNumCoef(0)\contNumCoef(0)] }{{\prod_{j=2}^{\numVeh}} [
		\spatEig_j \vehNumCoef(0)\contNumCoef(0)]}
\nonumber \\
&=& \weightCO \frac{\prod_{i=2}^{\numVeh-\distanceCO-1}
	\spatEigZ_i}{\prod_{j=2}^{\numVeh} \spatEig_j}.
	\label{eq:dcGain}
	\end{IEEEeqnarray}

This shows that the steady-state gain \emph{does not depend on the dynamic model} of an
individual agent, it is only a function of the structure of the network
($\spatEig_j$ and $\spatEigZ_i$ are both obtained from $\lapl$). We can now
apply the previous result to get the steady-state gain of the transfer function
$\diagTransBlockCO(s)$ in vehicular platoons.

\begin{theorem} 
The steady-state gain of the platoon is given by
\begin{equation}
	\diagTransBlockCO(0) = 
	\begin{cases}
			\weightCO
\left(1+\sum_{i=1}^{\contNode-2} \prod_{j=1}^{i} \wb_{\contNode-j} \right) &
\text{ for } \contNode \leq \obsvNode \\
 	\weightCO \left(1+\sum_{i=1}^{\obsvNode-2}
		\prod_{j=1}^{i} \wb_{\obsvNode-j} \right) & \text{ for } \obsvNode < 
		\contNode
	\end{cases}
	\label{eq:dcgainplatoon}
\end{equation}
\label{thm:dcgain}
\end{theorem}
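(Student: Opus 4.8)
The plan is to reduce the theorem to two determinant evaluations. Starting from (\ref{eq:dcGain}), the steady-state gain is $\diagTransBlockCO(0)=\weightCO\,(\prod_{i=2}^{\numVeh-\distanceCO-1}\spatEigZ_i)(\prod_{j=2}^{\numVeh}\spatEig_j)^{-1}$ (we use the leader‑excluded normalisation of (\ref{eq:dcGain}), i.e.\ $2\le\contNode,\obsvNode\le\numVeh$). By Lemma~\ref{lem:laplProp}d) the reduced Laplacian $\redLapl$ carries exactly the nonzero eigenvalues of $\lapl$, so $\prod_{j=2}^{\numVeh}\spatEig_j=\det\redLapl$; and $\prod_{i=2}^{\numVeh-\distanceCO-1}\spatEigZ_i$ is the product of the nonzero eigenvalues of $\redLaplCO$, the discarded factor $\spatEigZ_1=0$ being a \emph{single} zero eigenvalue. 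Both $\redLapl$ and $\redLaplCO$ are (block) tridiagonal with a very rigid pattern, which is what makes the determinants computable in closed form.

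\textbf{Denominator.} The matrix $\redLapl$ is tridiagonal with subdiagonal all $-1$, superdiagonal $(-\wb_2,\dots,-\wb_{\numVeh-1})$, and diagonal $(1+\wb_2,\dots,1+\wb_{\numVeh-1},1)$. Writing $d_k$ for the determinant of its bottom-right $k\times k$ block and expanding along the first row, one gets $d_1=d_2=1$ and $d_k=(1+\wb_{\numVeh-k+1})d_{k-1}-\wb_{\numVeh-k+1}d_{k-2}$ for $k\ge 3$; a one-line induction gives $d_k=1$ for all $k$, so $\det\redLapl=1$. (Equivalently, adding all other columns to the first turns it into the vector of row sums of $\redLapl$, which is $(1,0,\dots,0)^{\!T}$; this is also why $\diagTransBlockCO(0)$ stays finite even though $\numInteg\ge 1$.)

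\textbf{Numerator.} Since $\redLaplCO$ is the principal submatrix of $\lapl$ obtained by deleting the contiguous block of path nodes, it depends only on $\{\min(\contNode,\obsvNode),\dots,\max(\contNode,\obsvNode)\}$, so I assume $\contNode\le\obsvNode$. Deleting nodes $\contNode,\dots,\obsvNode$ splits the path into the segments $1,\dots,\contNode-1$ and $\obsvNode+1,\dots,\numVeh$, and the surviving endpoints $\contNode-1$ and $\obsvNode+1$ differ by at least $2$, so $\redLaplCO=\mathrm{blockdiag}(\matA,\matB)$ with $\matA$ the submatrix of $\lapl$ on $1,\dots,\contNode-1$ and $\matB$ the submatrix on $\obsvNode+1,\dots,\numVeh$. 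Now $\matB$ has the same form as $\redLapl$ (subdiagonal $-1$, superdiagonal $-\wb_{\obsvNode+1},\dots,-\wb_{\numVeh-1}$, diagonal $1+\wb_{\obsvNode+1},\dots,1+\wb_{\numVeh-1},1$), so the same recursion gives $\det\matB=1$ and $\matB$ nonsingular. And $\matA$ has a zero first row (the leader row of $\lapl$), hence it is block lower-triangular with leading $1\times 1$ block $[0]$ and trailing block $\matA_{\mathrm{r}}$, the tridiagonal matrix of size $\contNode-2$ with subdiagonal $-1$, superdiagonal $-\wb_2,\dots,-\wb_{\contNode-2}$, and diagonal $1+\wb_2,\dots,1+\wb_{\contNode-1}$. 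Running the same bottom-right-minor recursion on $\matA_{\mathrm{r}}$, now with base value $t_1=1+\wb_{\contNode-1}$ and $t_0=1$, and rewriting it as $t_k=t_{k-1}+\wb_{\contNode-k}(t_{k-1}-t_{k-2})$, the successive differences telescope to $t_k-t_{k-1}=\prod_{j=1}^{k}\wb_{\contNode-j}$, so $\det\matA_{\mathrm{r}}=t_{\contNode-2}=1+\sum_{i=1}^{\contNode-2}\prod_{j=1}^{i}\wb_{\contNode-j}\ge 1$. In particular $\matA_{\mathrm{r}}$ is nonsingular, so the lone zero eigenvalue of $\redLaplCO$ is simple (this is $\spatEigZ_1$) and $\prod_{i=2}^{\numVeh-\distanceCO-1}\spatEigZ_i=\det\matA_{\mathrm{r}}\cdot\det\matB=1+\sum_{i=1}^{\contNode-2}\prod_{j=1}^{i}\wb_{\contNode-j}$.

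\textbf{Assembling.} Substituting into (\ref{eq:dcGain}) gives $\diagTransBlockCO(0)=\weightCO(1+\sum_{i=1}^{\contNode-2}\prod_{j=1}^{i}\wb_{\contNode-j})$ when $\contNode\le\obsvNode$, and $\weightCO=1$ there by (\ref{eq:pathWeight}), which is the first branch. When $\obsvNode<\contNode$ the identical computation with $\obsvNode$ in place of $\contNode$ yields the numerator $1+\sum_{i=1}^{\obsvNode-2}\prod_{j=1}^{i}\wb_{\obsvNode-j}$, while $\weightCO=\prod_{i=\obsvNode}^{\contNode-1}\wb_i$, which is the second branch. The only real work is bookkeeping: reading off the two tridiagonal submatrices with the correct index ranges and noticing that the leader's zero row makes $\matA$ block-triangular so the spurious zero eigenvalue factors out cleanly; the determinant recursions themselves are one-line inductions. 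I expect the index management around $\redLaplCO$---keeping straight which $\wb_i$ survive in $\matA$ versus $\matB$, and the associated off-by-one in the size of $\matA_{\mathrm{r}}$---to be the part most prone to slips.
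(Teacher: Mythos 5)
Your proof is correct and follows essentially the same route as the paper's: reduce both products in (\ref{eq:dcGain}) to determinants of $\redLapl$ and $\redLaplCO$, show $\det\redLapl=1$ by the tridiagonal recursion, block-diagonalize $\redLaplCO$ into a front and a rear segment, and telescope the same recursion on the front block to get $1+\sum_{i=1}^{\contNode-2}\prod_{j=1}^{i}\wb_{\contNode-j}$. The only (cosmetic) difference is that you keep the leader node inside the front block and factor out its zero eigenvalue via block-triangularity, whereas the paper excludes the leader from the outset and works directly with the $(\contNode-2)\times(\contNode-2)$ block.
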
  
The proof is in Appendix \ref{sec:pfDcGain}. Note that for $\contNode \leq
\obsvNode$, the steady-state gain does not depend on $\obsvNode$ as
$\weightCO=1$ for $\contNode \leq \obsvNode$.
We can discuss several cases relevant for the platoon control. 

 \begin{corollary}
	If there is  $\wb_{\max}$ such that
	$\wb_i \leq \wb_{\max} < 1 \, \,\forall i$, then $\tranFunCo(0)$ is upper bounded as $\diagTransBlockCO(0) \leq \frac{1}{1-\wb_{\max}}$. This holds for all $\numVeh$ and for all $\contNode, \obsvNode$.
\end{corollary}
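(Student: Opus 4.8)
The plan is to read the closed form off Theorem~\ref{thm:dcgain} and majorize it term by term by a geometric series. The key observation is that in both branches of (\ref{eq:dcgainplatoon}) the expression is a product of the path weight $\weightCO$ and a bracketed sum of products of the $\wb_i$'s, and every $\wb_i$ lies in $[0,\wbmax]$ with $\wbmax<1$, so each such product decays geometrically.

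First I would dispose of the factor $\weightCO$. From (\ref{eq:pathWeight}), $\weightCO=1$ when $\contNode\le\obsvNode$, and for $\contNode>\obsvNode$ it is the product $\prod_{i=\obsvNode}^{\contNode-1}\wb_i$ of nonnegative numbers each bounded by $\wbmax<1$; hence $0\le\weightCO\le1$ for every choice of $\contNode,\obsvNode$. So it suffices to bound the bracketed term in (\ref{eq:dcgainplatoon}) by $\tfrac{1}{1-\wbmax}$ uniformly.

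Second, the bracket in either branch has the form $1+\sum_{i=1}^{k-2}\prod_{j=1}^{i}\wb_{k-j}$ with $k=\contNode$ or $k=\obsvNode$. Using $0\le\wb_m\le\wbmax$ for every index $m$, each inner product satisfies $\prod_{j=1}^{i}\wb_{k-j}\le\wbmax^{\,i}$, so the bracket is at most $\sum_{i=0}^{k-2}\wbmax^{\,i}$. Since $0\le\wbmax<1$, this finite sum is dominated by the convergent geometric series $\sum_{i=0}^{\infty}\wbmax^{\,i}=\tfrac{1}{1-\wbmax}$, a bound that does not involve $k$, hence not $\numVeh$, $\contNode$ or $\obsvNode$. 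Multiplying by $\weightCO\le1$ finishes the argument.

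There is really no obstacle here; the only point requiring attention is that the majorizing bound must be independent of the platoon size, which is exactly what the step of dropping the upper summation limit $k-2$ in favour of $\infty$ delivers. If one wanted the sharper estimate $\weightCO\,\tfrac{1-\wbmax^{\,k-1}}{1-\wbmax}$, it would follow from the same computation without this last relaxation.
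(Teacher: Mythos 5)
Your proof is correct and follows essentially the same route as the paper's: bound each product $\prod_{j=1}^{i}\wb_{\cdot-j}$ by $\wbmax^{\,i}$, dominate the bracket by the geometric series $\sum_{i=0}^{\infty}\wbmax^{\,i}=\tfrac{1}{1-\wbmax}$, and observe that $\weightCO\le 1$ in both cases. No gaps.
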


\begin{proof}
We can bound the product in (\ref{eq:dcgainplatoon}) as $\prod_{j=1}^{i}
\wb_{\contNode-j} \leq \wb_{\max}^i$. Then $\diagTransBlockCO(0) \leq \weightCO \left(1+\sum_{i=1}^{\contNode-2} \wb_{\max}^i \right) \leq \weightCO \frac{1}{1-\wb_{\max}}$, since $\sum_{i=0}^\infty \wb_{\max}^i = \frac{1}{1-\wb_{\max}}$. The same holds for $\weightCO \left(1+\sum_{i=1}^{\obsvNode-2} \prod_{j=1}^{i} \wb_{\obsvNode-j} \right) \leq \weightCO \frac{1}{1-\wb_{\max}} $. If $\contNode \leq \obsvNode$, then $\weightCO=1$. If $\contNode > \obsvNode$, then $\weightCO = \prod_{i=\contNode-1}^{\obsvNode} \wb_i \leq \wbmax^{\distanceCO} < 1$. Therefore, $\diagTransBlockCO(0) \leq \weightCO \frac{1}{1-\wb_{\max}} \leq \frac{1}{1-\wb_{\max}}$.
\end{proof}

The bound on $\tranFunCo(0)$ for the predecessor-following control strategy is
one (note $\wb_{\max}=0$), which is the minimum amidst all control strategies.
For the symmetric bidirectional control we use (\ref{eq:dcgainplatoon}) to get
the steady-state gain equal to $\contNode-1$, which shows that it is unbounded in
$\numVeh$. This can be explained by the fact that all the vehicles ahead of
the vehicle $\contNode$ have to increase the distance to neighbors by one. The
steady-state gains for a fixed control node and a varying output node for several strategies are in Fig.~\ref{fig:dcgainCI}, while the gain from $\contNode$ to $\contNode$ is in Fig.~\ref{fig:dcgainII}.
Although the gain grows with $\contNode$, for a fixed $\contNode$, it does not
grow with the number $\numVeh$ of agents. 

One might also be interested in the change of the intervehicular distance
$\distPos_\obsvNode=\pos_{\obsvNode-1}-\pos_\obsvNode$ as an effect of the input $\inp_\contNode$. Then $\tranFunDist(s)=\frac{\distPos_\obsvNode(s)}{\inp_\contNode(s)}=\diagTransBlockEig_{\contNode,
\obsvNode-1}(s)-\diagTransBlockEig_{\contNode, \obsvNode}(s)$. Using
(\ref{eq:dcgainplatoon}) and (\ref{eq:pathWeight}), its
steady-state gain is  $\tranFunDist(0)=0$ for $\obsvNode \geq \contNode$ and
$\tranFunDist(0)=-\prod_{i=\obsvNode}^{\contNode-1} \wb_i$ for $\obsvNode \leq
\contNode$. This means that all the
vehicles ahead of $\contNode$ have to increase their steady-state distances (unless $\wb_i=0,\, \forall i$),
while distances of the cars behind $\contNode$ remain unchanged. In
asymmetric control with $\wb_i \leq \wbmax < 1 \, \forall i$ the change in distance will be less than one since $\prod_{i=\obsvNode}^{\contNode-1} \wb_i < 1$.

\begin{figure}
\centering
	\begin{subfigure}[b]{0.24\textwidth}
	\includegraphics[width=1\textwidth]{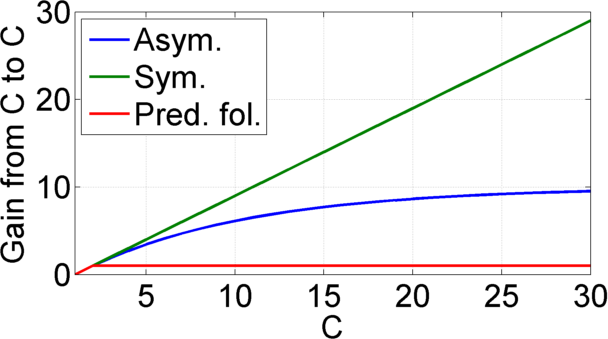}
	\caption{Varying $\contNode$ and $\obsvNode$, $\obsvNode=\contNode$.}
	\label{fig:dcgainII}
	\end{subfigure} 
	\begin{subfigure}[b]{0.24\textwidth}
	\includegraphics[width=1\textwidth]{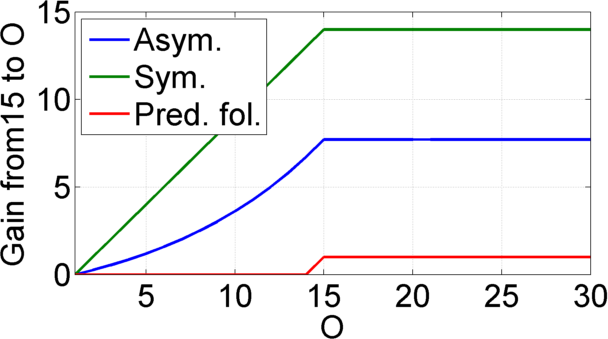}
	\caption{Fixed $\contNode=15$, varying $\obsvNode$.}
	\label{fig:dcgainCI}
	\end{subfigure}
	\caption{Steady-state gains for different choices of $\contNode$ and $\obsvNode$ and for asymmetry $\wb=0.9$.}
\end{figure}

\section{Scaling of $\hinfnorm$ norms in platoons}
In this section we investigate how the $\hinfnorm$ norm of an arbitrary transfer
function $\tranFunCo(s)$ changes when more vehicles are added ($\numVeh$ grows).
Define two types of transfer functions
	\begin{equation}
	\diagTransBlockEig_j(s) =
	\frac{\spatEig_j \vehNumCoef(s)\contNumCoef(s)}{\vehDenCoef(s)\contDenCoef(s) +
	\spatEig_j \vehNumCoef(s)\contNumCoef(s)}, \,
	\diagTransBlockEigZij(s) = \frac{\vehDenCoef(s)\contDenCoef(s) +
	\spatEigZ_i \vehNumCoef(s)\contNumCoef(s)}{\vehDenCoef(s)\contDenCoef(s) + \spatEig_j
	\vehNumCoef(s)\contNumCoef(s)}. \label{eq:Zi}
	\end{equation}

From the product (\ref{eq:tranFunGen}), we can form $\distanceCO+1$ transfer functions of type $\diagTransBlockEig_j(s)$ and $\numVeh-\distanceCO-1$ of type
$\diagTransBlockEigZ_{ij}(s)$, up to the gain. Let $\diagTransBlockEigMin(s)$ be the transfer function of the closed-loop system 
 \begin{equation}
 	\diagTransBlockEigMin(s) = \frac{\spatEigMin
 	\vehNumCoef(s)\contNumCoef(s)}{\vehDenCoef(s)\contDenCoef(s) + \spatEigMin
 	\vehNumCoef(s)\contNumCoef(s)}
 \end{equation}
with $\spatEigMin $ acting as a proportional gain ($\spatEigMin > 0$ is the lower bound on $\spatEig_i, i \geq 2$)
Similarly, for the upper bound on eigenvalues $\spatEigMax$ let
$\diagTransBlockEigMax(s)$ be the corresponding closed loop. Note that
$|\diagTransBlockEigWn(0)|=1$ due to at least one integrator in the open loop,
hence $\|\diagTransBlockEigWn(s) \|_\infty \geq 1$. The next technical Lemma is
proved in Appendix \ref{sec:pfTiZi}.

\begin{lemma}
	Let $\spatEig_j \openLoop(\jmath \omega_0)=\alpha_j + \jmath \beta_j$ for some 	frequency $\omega_0 > 0$, $\alpha_j,\beta_j \in \mathbb{R}$,
	$\jmath=\sqrt{-1}$.
	Then
	\begin{itemize}
	  \item[a)] If $|\diagTransBlockEig_{i}(\jmath \omega_0)| > 1$, then
	  $|\diagTransBlockEigWn(\jmath \omega_0)| > 1 \quad \forall \spatEig_j \geq
	  \spatEig_i$ and $\alpha_j < -1/2$.
	  \item[b)] If $|\diagTransBlockEig_{i}(\jmath \omega_0)| \leq 1$,
	  then $|\diagTransBlockEigWn(\jmath \omega_0)| \leq 1 \quad \forall \spatEig_j \leq
	  \spatEig_i$ and $\alpha_j \geq -1/2$.
	  \item[c)] 
	 $
		|\diagTransBlockEigZij(\jmath \omega_0)| \geq
		|\diagTransBlockEigZij(0)|  \, \text{ for } \{\alpha_j \leq -1 \text{ and } \spatEigZ_i \geq \spatEig_j
		\}
		$
		\item[d)]
		$
		|\diagTransBlockEigZij(\jmath \omega_0)| \geq |\diagTransBlockEigZij(0)|
		\,\text{ for }  \{-1 < \alpha_j \leq -\frac{1}{2} \text{ and } \spatEigZ_i
		\leq \spatEig_j \} 
		$
		\item[e)]
		$
		|\diagTransBlockEigZij(\jmath \omega_0)| \leq |\diagTransBlockEigZij(0)|
		\,\text{ for }
		 \{\alpha_j > -\frac{1}{2} \text{ and } \spatEigZ_i \geq \spatEig_j \}
	$
	\end{itemize}
	\label{lem:normsTiZi}
\end{lemma}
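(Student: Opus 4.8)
The plan is to express both families of transfer functions through the scalar open-loop frequency response and then reduce every assertion to an elementary inequality for a single complex number. Write $\openLoop(s)=\vehNumCoef(s)\contNumCoef(s)/(\vehDenCoef(s)\contDenCoef(s))$, so that $\diagTransBlockEig_j(s)=\spatEig_j\openLoop(s)/(1+\spatEig_j\openLoop(s))$ and $\diagTransBlockEigZij(s)=(1+\spatEigZ_i\openLoop(s))/(1+\spatEig_j\openLoop(s))$. Put $\openLoop(\jmath\omega_0)=a+\jmath b$ with $a,b\in\mathbb{R}$; then $\alpha_j=\spatEig_j a$ and $\beta_j=\spatEig_j b$, and the structural point is that $\spatEig\mapsto\spatEig a$ is linear through the origin, so its sign is fixed by that of $a$ and its magnitude is monotone in $\spatEig$. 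Asymptotic stability (see the Remark after (\ref{eq:tranFunGen})) guarantees $1+\spatEig\openLoop(\jmath\omega_0)\neq0$ for $\spatEig\in[\spatEigMin,\spatEigMax]$, and $\spatEig_j,\spatEigZ_i\in[\spatEigMin,\spatEigMax]$ are positive.

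For a) and b) I would compute directly
\[
|\diagTransBlockEig_j(\jmath\omega_0)|^2=\frac{\alpha_j^2+\beta_j^2}{(1+\alpha_j)^2+\beta_j^2},
\]
so numerator minus denominator equals $-(1+2\alpha_j)$; hence $|\diagTransBlockEig_j(\jmath\omega_0)|>1\iff\alpha_j<-\tfrac12$ and $|\diagTransBlockEig_j(\jmath\omega_0)|\le1\iff\alpha_j\ge-\tfrac12$. This gives the $\alpha_j$-claims in a) and b) at once. For the transfer of the bound in a): $|\diagTransBlockEig_i(\jmath\omega_0)|>1$ forces $\alpha_i=\spatEig_i a<-\tfrac12<0$, whence $a<0$; then $\spatEig_j\ge\spatEig_i$ yields $\alpha_j=\spatEig_j a\le\spatEig_i a=\alpha_i<-\tfrac12$, i.e.\ $|\diagTransBlockEigWn(\jmath\omega_0)|>1$. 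Part b) is the symmetric statement: $\alpha_i\ge-\tfrac12$ and $\spatEig_j\le\spatEig_i$ give $\alpha_j\ge0$ if $a\ge0$ and $\alpha_j\ge\alpha_i\ge-\tfrac12$ if $a<0$, so $\alpha_j\ge-\tfrac12$ in either case.

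For c)--e), use $\numInteg\ge1$: exactly as in the derivation of (\ref{eq:dcGain}), $\vehDenCoef(0)\contDenCoef(0)=0$ forces $\diagTransBlockEigZij(0)=\spatEigZ_i\vehNumCoef(0)\contNumCoef(0)/(\spatEig_j\vehNumCoef(0)\contNumCoef(0))=\spatEigZ_i/\spatEig_j$, so $|\diagTransBlockEigZij(0)|^2=\spatEigZ_i^2/\spatEig_j^2$. At $\jmath\omega_0$, with $\rho=|\openLoop(\jmath\omega_0)|^2$, one has $|1+\spatEig\,\openLoop(\jmath\omega_0)|^2=1+2\spatEig a+\spatEig^2\rho$ for $\spatEig\in\{\spatEigZ_i,\spatEig_j\}$, both strictly positive. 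Cross-multiplying $|\diagTransBlockEigZij(\jmath\omega_0)|^2\ge\spatEigZ_i^2/\spatEig_j^2$, the terms carrying $\rho$ cancel and, after substituting $a=\alpha_j/\spatEig_j$, the inequality collapses to
\[
(\spatEig_j-\spatEigZ_i)\bigl[\,\spatEig_j+\spatEigZ_i(1+2\alpha_j)\,\bigr]\ge0 .
\]
The three claims are now sign checks. In c), $\spatEig_j-\spatEigZ_i\le0$ and $\alpha_j\le-1$ give $\spatEigZ_i(1+2\alpha_j)\le-\spatEigZ_i\le-\spatEig_j$, so the bracket is $\le0$ and the product $\ge0$. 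In d), $\spatEig_j-\spatEigZ_i\ge0$ and $-1<1+2\alpha_j\le0$ give $\spatEigZ_i(1+2\alpha_j)>-\spatEigZ_i\ge-\spatEig_j$, so the bracket is $>0$ and the product $\ge0$. In e), $\spatEig_j-\spatEigZ_i\le0$ while $1+2\alpha_j>0$ makes the bracket $>0$, so the product is $\le0$, which is exactly $|\diagTransBlockEigZij(\jmath\omega_0)|\le|\diagTransBlockEigZij(0)|$.

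I expect the only real work to be bookkeeping: organizing the magnitude-squared algebra so the $\rho$-terms visibly cancel, and justifying the $s\to0$ limit of $\diagTransBlockEigZij$ (using $\numInteg\ge1$ and non-vanishing of the denominators on the imaginary axis, both inherited from asymptotic stability). Once the reduction to the product $(\spatEig_j-\spatEigZ_i)[\spatEig_j+\spatEigZ_i(1+2\alpha_j)]$ is secured, the rest is the short case analysis above.
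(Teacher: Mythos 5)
Your proof is correct and follows essentially the same route as the paper: parts c)--e) reduce to exactly the same sign condition (your factor $(\spatEig_j-\spatEigZ_i)[\spatEig_j+\spatEigZ_i(1+2\alpha_j)]$ is, up to the positive factor $\spatEigZ_i^2$, the paper's numerator $m_{ij}$), and parts a)--b) rest on the same equivalence $|\diagTransBlockEigWn(\jmath\omega_0)|>1\iff\alpha_j<-1/2$ plus monotonicity of $\spatEig\mapsto\spatEig\,\mathrm{Re}\,\openLoop(\jmath\omega_0)$. The only difference is that you prove a) by direct computation where the paper cites an earlier result, which makes your version slightly more self-contained.
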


\subsection{Exponential growth}
\label{sec:exponentialScaling}
It was proven in \cite{Herman2013b} that the response of the last vehicle grows exponentially in $\numVeh$ due to the presence of a uniform nonzero lower bound on the eigenvalues. However, the analysis was done only for one transfer function in the platoon and one input---the movement of the leader. The next theorem proven in  Appendix \ref{sec:pfScalingWithDistance} extends the exponential scaling to an arbitrary transfer function in a finite platoon. The test involves only the closed-loop $\diagTransBlockEigMin(s)$ of an individual agent.
\begin{theorem}
If $\|\diagTransBlockEigMin(s)\|_\infty > 1$ and the eigenvalues of $\lapl$ are uniformly bounded from zero, then there are two real constants $0 < \xi \leq 1$ and $\zeta > 1$ depending only on $\spatEigMin, \spatEigMax$ and $\openLoop(s)$ such that  $\|\diagTransBlockCO(s)\|_\infty >\zeta^{\distanceCO} \:\:\diagTransBlockCO(0) \:\: \xi^2$. That is,  the norm $\|\tranFunCo(s)\|_\infty$ grows exponentially with the graph distance $\distanceCO$.
\label{thm:scalingWithDistance} 
\end{theorem}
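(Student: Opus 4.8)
\medskip
\noindent\textbf{Proof plan.} The plan is to lower‑bound the $\hinfnorm$ norm by the modulus of $\diagTransBlockCO$ at a single well‑chosen frequency. Since $|\diagTransBlockEigMin(0)|=1$ while $\|\diagTransBlockEigMin(s)\|_\infty>1$, and $\omega\mapsto|\diagTransBlockEigMin(\jmath\omega)|$ is continuous with a finite limit at infinity, there is a finite $\omega_0>0$ with $|\diagTransBlockEigMin(\jmath\omega_0)|>1$; then $\|\diagTransBlockCO(s)\|_\infty\ge|\diagTransBlockCO(\jmath\omega_0)|$, so it suffices to bound $|\diagTransBlockCO(\jmath\omega_0)|/\diagTransBlockCO(0)$ from below by $\zeta^{\distanceCO}\xi^2$.

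First I would split the product. After excluding the leader, for any $(\distanceCO+1)$‑element set $\mathcal{A}\subset\{2,\dots,\numVeh\}$ and any bijection $\sigma$ from the $\spatEigZ$‑indices onto $\{2,\dots,\numVeh\}\setminus\mathcal{A}$, the product form (\ref{eq:tranFunGen}) and the steady‑state gain (\ref{eq:dcGain}) give, using $\weightCO>0$, cancellation of the factors $1/\spatEig_j$, and $|\diagTransBlockEig_j(0)|=1$,
\[
\frac{|\diagTransBlockCO(\jmath\omega_0)|}{\diagTransBlockCO(0)}=\Bigl(\prod_{j\in\mathcal{A}}|\diagTransBlockEig_j(\jmath\omega_0)|\Bigr)\,\prod_{i}\frac{|\diagTransBlockEigZ_{i\sigma(i)}(\jmath\omega_0)|}{|\diagTransBlockEigZ_{i\sigma(i)}(0)|}.
\]
Applying Lemma~\ref{lem:normsTiZi}(a) with the smallest gain $\spatEigMin$, the fact $|\diagTransBlockEigMin(\jmath\omega_0)|>1$ forces $\mathrm{Re}\bigl(\spatEig\,\openLoop(\jmath\omega_0)\bigr)<-\tfrac12$, equivalently $\bigl|\spatEig\openLoop(\jmath\omega_0)/(1+\spatEig\openLoop(\jmath\omega_0))\bigr|>1$, for \emph{every} $\spatEig\in[\spatEigMin,\spatEigMax]$. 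This modulus is continuous in $\spatEig$ and exceeds $1$ on the compact interval, so $\zeta$, defined as its minimum over $[\spatEigMin,\spatEigMax]$, is $>1$ and depends only on $\spatEigMin,\spatEigMax,\openLoop$. Since all $\spatEig_j$ and (by the Remark after (\ref{eq:tranFunGen})) all $\spatEigZ_i$ lie in $[\spatEigMin,\spatEigMax]$, the first product is $\ge\zeta^{\distanceCO+1}$, regardless of $\mathcal{A}$.

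The core is the second product. I would use the interlacing $\spatEig_i\le\spatEigZ_i\le\spatEig_{i+\distanceCO+1}$, which holds because $\redLaplCO$---obtained from $\redLapl$ by deleting the consecutive block of vertices on the path---is similar to a principal submatrix of a symmetric tridiagonal matrix similar to $\redLapl$ (or one iterates Lemma~\ref{lem:laplProp}(f)). Because $\mathrm{Re}(\spatEig_{\sigma(i)}\openLoop(\jmath\omega_0))<-\tfrac12$ always holds, case (e) of Lemma~\ref{lem:normsTiZi} never applies, and I would pick $\mathcal{A}$ (to absorb the $\distanceCO+1$ surplus eigenvalues, kept away from the critical value $\lambda^{\ast}$ where $\mathrm{Re}(\spatEig\openLoop(\jmath\omega_0))=-1$) and $\sigma$ compatibly with the interlacing so that each remaining pair $(\spatEigZ_i,\spatEig_{\sigma(i)})$ meets the hypothesis of part (c) (when $\mathrm{Re}(\spatEig_{\sigma(i)}\openLoop(\jmath\omega_0))\le-1$ and $\spatEigZ_i\ge\spatEig_{\sigma(i)}$) or of part (d) (when $-1<\mathrm{Re}(\cdot)<-\tfrac12$ and $\spatEigZ_i\le\spatEig_{\sigma(i)}$); for such pairs $|\diagTransBlockEigZ_{i\sigma(i)}(\jmath\omega_0)|\ge|\diagTransBlockEigZ_{i\sigma(i)}(0)|$, so that factor is $\ge1$. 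The only indices that escape this are the at most $\distanceCO+1$ whose eigenvalues straddle $\lambda^{\ast}$; for those, writing $g(\spatEig):=|1+\spatEig\openLoop(\jmath\omega_0)|^2/\spatEig^2$ and letting $0<g_{\min}\le g_{\max}<\infty$ be its extrema on $[\spatEigMin,\spatEigMax]$ (the lower bound being exactly the uniform stability margin), one has $|\diagTransBlockEigZ_{i\sigma(i)}(\jmath\omega_0)|/|\diagTransBlockEigZ_{i\sigma(i)}(0)|=\sqrt{g(\spatEigZ_i)/g(\spatEig_{\sigma(i)})}\ge\sqrt{g_{\min}/g_{\max}}$. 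Hence $|\diagTransBlockCO(\jmath\omega_0)|/\diagTransBlockCO(0)\ge\bigl(\zeta\sqrt{g_{\min}/g_{\max}}\bigr)^{\distanceCO+1}$, and in the degenerate cases where all eigenvalues lie on one side of $\lambda^{\ast}$ the straddling factors disappear and one gets the cleaner $\zeta^{\distanceCO+1}$.

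The hard part is the combinatorial bookkeeping of the previous step---arranging $\mathcal{A}$ and $\sigma$ so that this really is a bijection, only $O(\distanceCO)$ factors escape (c)/(d), and they are matched against eigenvalues on which $g$ is comparable---together with verifying that the effective base, call it $\zeta'$ (the quantity $\zeta\sqrt{g_{\min}/g_{\max}}$, possibly after organizing the exceptional factors more carefully), is still strictly above $1$. It is exactly here that the hypotheses enter: a uniform positive lower bound on the Fiedler eigenvalue makes $[\spatEigMin,\spatEigMax]$ a fixed, $\numVeh$‑independent interval, and asymptotic stability for every $\numVeh$ keeps $\lambda^{\ast}$ (when it lies in that interval) at a uniform distance from marginality, so $g_{\min}$ is bounded away from $0$ uniformly in $\numVeh$, forcing $\zeta'>1$. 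Relabelling $\zeta:=\zeta'$ and choosing $\xi\in(0,1]$ as a fixed slack absorbing the $\le\distanceCO+1$ exceptional factors (and small‑$\distanceCO$ corrections) then yields $\|\diagTransBlockCO(s)\|_\infty>\zeta^{\distanceCO}\,\diagTransBlockCO(0)\,\xi^{2}$, with $\xi,\zeta$ depending only on $\spatEigMin,\spatEigMax$ and $\openLoop(s)$.
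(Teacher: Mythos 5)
Your overall strategy is the one the paper itself uses: evaluate at a frequency $\omega_0$ where $|\diagTransBlockEigMin(\jmath\omega_0)|>1$, factor $\tranFunCo$ into $\distanceCO+1$ closed loops $\diagTransBlockEigWn$ and $\numVeh-\distanceCO-2$ ratios $\diagTransBlockEigZij$, get $\zeta>1$ from Lemma~\ref{lem:normsTiZi}(a) and compactness of $[\spatEigMin,\spatEigMax]$, and use interlacing to pair the $\spatEigZ_i$ with the $\spatEig_j$ so that cases (c)/(d) of Lemma~\ref{lem:normsTiZi} apply. The normalization identity you write for $|\diagTransBlockCO(\jmath\omega_0)|/\diagTransBlockCO(0)$ is correct, and the $\diagTransBlockEigWn$ part of the argument is sound.

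The gap is precisely in the step you flag as ``the hard part,'' and the patch you propose does not close it. You concede that up to $\distanceCO+1$ pairs may fail cases (c)/(d), and you then try to absorb a factor $\sqrt{g_{\min}/g_{\max}}<1$ \emph{per exceptional pair} into the exponential base, asserting that $\zeta'=\zeta\sqrt{g_{\min}/g_{\max}}>1$. Nothing in the hypotheses forces this: uniform boundedness of the eigenvalues and closed-loop stability give $g_{\min}>0$, but they give no control on the ratio $g_{\min}/g_{\max}$ relative to how far $\zeta$ exceeds $1$, so $\zeta'$ can perfectly well be $\le 1$ and the claimed exponential lower bound evaporates. The theorem's conclusion, with a \emph{fixed} $\xi^2$ independent of $\distanceCO$ and $\numVeh$, requires showing that the number of exceptional pairs is bounded by an absolute constant (two), not by $\distanceCO+1$. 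That is exactly what the paper's proof establishes: using the two-sided interlacing $\spatEig_i\le\spatEigZ_i\le\spatEig_{i+2}$ from Lemma~\ref{lem:laplProp}(f), it pairs $\spatEigZ_i$ with $\spatEig_i$ on the side of the critical value where case (c) is needed ($\alpha_j\le-1$, requiring $\spatEigZ_i\ge\spatEig_j$) and with $\spatEig_{i+2}$ on the side where case (d) is needed ($-1<\alpha_j\le-\tfrac12$, requiring $\spatEigZ_i\le\spatEig_j$); the index shift of $2$ between the two regimes leaves exactly two $\diagTransBlockEigZij$'s unmatched, each bounded below by a constant $\xi$ depending only on $\spatEigMin,\spatEigMax,\openLoop$ --- whence $\xi^2$. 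Your proposal never performs this bookkeeping, and without it (or an equivalent argument bounding the exceptional count uniformly) the proof does not go through. A secondary, smaller imprecision: your claim that the exceptional pairs are only those ``straddling $\lambda^{\ast}$'' and that there are at most $\distanceCO+1$ of them is itself unproven --- with a naive monotone pairing, \emph{every} pair on the $-1<\alpha_j\le-\tfrac12$ side violates the required inequality $\spatEigZ_i\le\spatEig_j$ since interlacing gives $\spatEigZ_i\ge\spatEig_i$, so the choice of shift is essential, not cosmetic.
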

The effect of the input $\inp_\contNode$ applied at the control node gets exponentially amplified with the  graph distance between $\contNode$ and $\obsvNode$. Hence, it is amplified as it propagates further from the control node even in a platoon with fixed $\numVeh$.
Figure \ref{fig:scalingWithDistance} shows scaling for a third-order model with varying asymmetry in a given range. If $\obsvNode < \contNode$, then $\tranFunCo(0)$ given in (\ref{eq:dcgainplatoon}) might decrease faster than $\zeta ^ {\distanceCO}$ grows and the norm might be less than one (Fig. \ref{fig:freqCharRearFront}). If $\contNode \leq \obsvNode$, then $\|\tranFunCo(s) \|_\infty \gg 1$  for large $\distanceCO$ (Fig. \ref{fig:freqCharFrontRear}).  In Fig. \ref{fig:hinfNorm} we show how $\|\tranFunCo(s)\|_\infty$ changes with a graph distance ---  $\contNode=3$ is kept fixed and $\obsvNode$ is varied, so that $\distanceCO$ grows with growing $\obsvNode$.

\begin{figure*}
\centering 
	\begin{subfigure}[b]{0.24\textwidth}
	\includegraphics[width=1\textwidth]{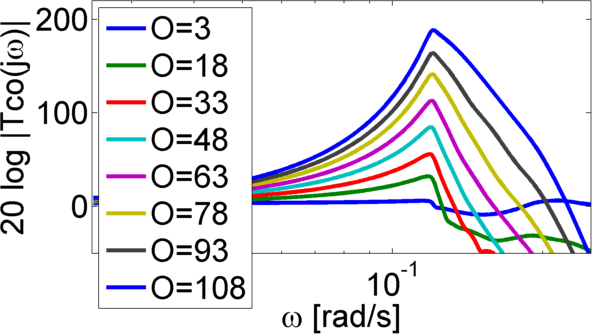}
	\caption{$\contNode=3$, $\obsvNode\geq \contNode$, $\wb_i \in [0.4, 0.6]$}
	\label{fig:freqCharFrontRear}
	\end{subfigure}   
	\begin{subfigure}[b]{0.24\textwidth}
	\includegraphics[width=1\textwidth]{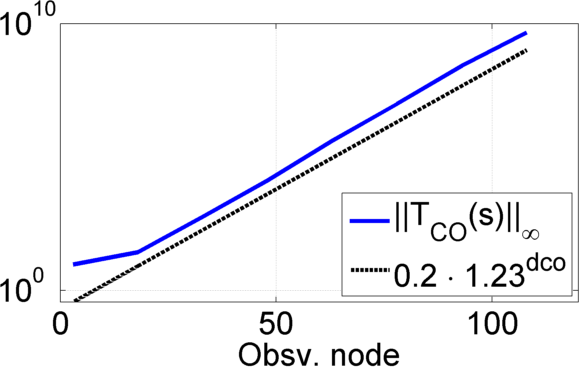}
	\caption{$\|\tranFunCo(s)\|_\infty$ for a), $\contNode=3$}
	\label{fig:hinfNorm}
	\end{subfigure}   
	\begin{subfigure}[b]{0.24\textwidth}
	\includegraphics[width=1\textwidth]{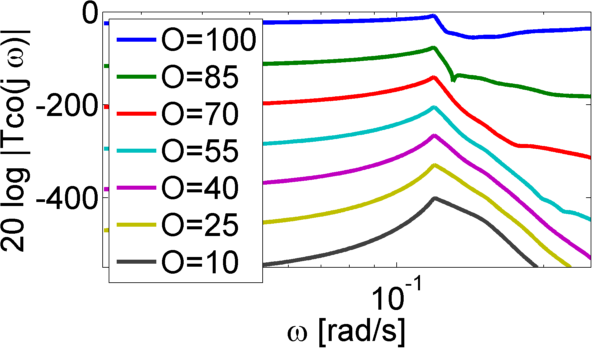}
	\caption{$\contNode=105$, $\obsvNode \leq \contNode$, $\wb_i \in [0.4, 0.6]$}
	\label{fig:freqCharRearFront}
	\end{subfigure}   
	\begin{subfigure}[b]{0.24\textwidth}
	\includegraphics[width=1\textwidth]{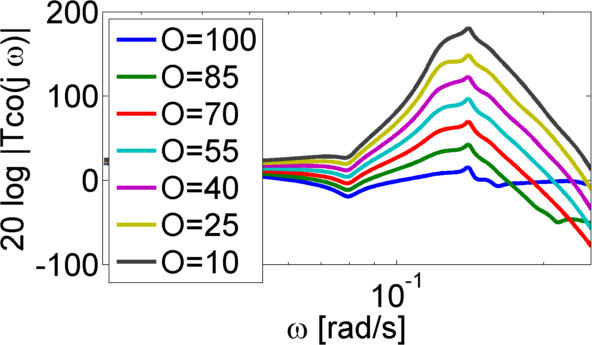}
	\caption{$\contNode=105$, $\contNode \geq \obsvNode$, $\wb_i \in [1.4, 1.6]$}
	\end{subfigure}  
	\caption{Scaling of  $|\diagTransBlockCO(\jmath \omega)|$ as a
	function of $\contNode$ kept fixed and $\obsvNode$ varying with $\numVeh=110$. The model is a PI controller $\controllerTf=\frac{s+1}{s}$ designed for a vehicle model $\vehicleTf\!=\!\frac{1}{s^2+5s}$, hence $\numInteg=2$ and the vehicle can track the leader moving with constant velocity. $\wb_i$ were randomly generated in the given range. Fig. \ref{fig:hinfNorm} shows $\|\tranFunCo(s) \|_\infty$ for the pairs $\contNode, \obsvNode$ used in a) in semilog. coordinates. It is clear that the norm scales exponentially.} 
	\label{fig:scalingWithDistance}    
\end{figure*}  

Two integrators in the open loop ($\numInteg=2$) are necessary for tracking of the 
leader moving with a constant velocity \cite[Lem. 3.1]{Yadlapalli2006}. However,
for at least two integrators in the open-loop we have
$\|\diagTransBlockEigMin(s)\|_\infty > 1$ \cite[Thm. 1]{Seiler2004a}. For
Laplacian with uniformly bounded eigenvalues this means that $\|\tranFunCo(s)
\|_\infty$ grows exponentially with the distance $\distanceCO$ and there is no
linear controller which could prevent this. Thus, we cannot have a good behavior
with a uniform bound and two integrators. The main results of \cite{Seiler2004a,
Tangerman2012, Herman2013b} are special cases of Theorem
\ref{thm:scalingWithDistance}, since asymmetric Laplacian with $\wb_i \leq
\wbmax < 1$ has uniformly bounded eigenvalues, see Lemma \ref{lem:laplProp} e).
Nevertheless, even a platoon with $\numInteg=1$ can
exhibit exponential scaling.
 \begin{figure*}
\centering
	\begin{subfigure}[b]{0.24\textwidth}
	\includegraphics[width=1\textwidth]{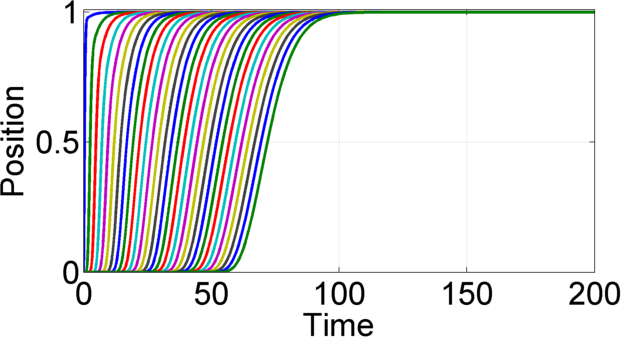}  
	\caption{Pred.fol. with $\controllerTf_1(s)$}
	\label{fig:pfC1} 
	\end{subfigure}  
	\begin{subfigure}[b]{0.24\textwidth}
	\includegraphics[width=1\textwidth]{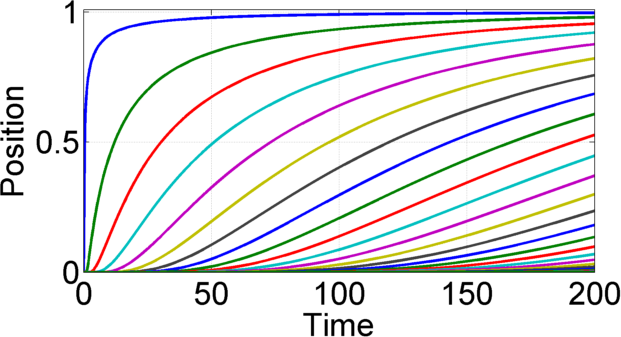}
	\caption{Asym., $\controllerTf_1(s)$, $\wb=0.9$}
	\label{fig:asymC1}
	\end{subfigure}  
	\begin{subfigure}[b]{0.24\textwidth}
	\includegraphics[width=1\textwidth]{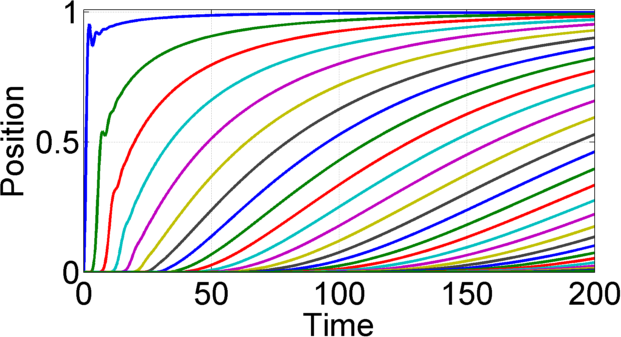}
	\caption{Asym., $\controllerTf_2(s)$, $\wb=0.9$}
	\label{fig:asymC2} 
	\end{subfigure} 
	\begin{subfigure}[b]{0.24\textwidth} 
	\includegraphics[width=1\textwidth]{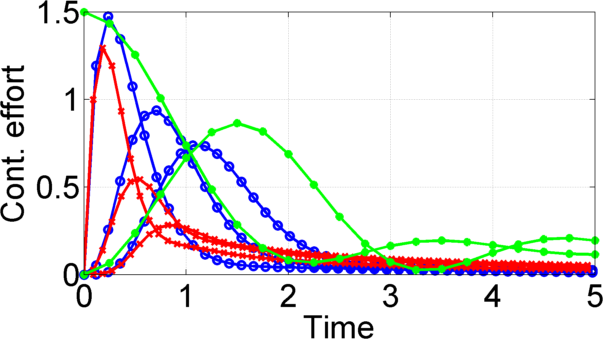}
	\caption{Control effort}
	\label{fig:contEffort}   
	\end{subfigure} 	
	\caption{Responses to leader's step in position for different architectures for $\numVeh=150$. In \ref{fig:contEffort}: blue - pred. fol., red - asym. with $\controllerTf_1(s)$, green - asym. with $\controllerTf_2(s)$ for the first three vehicles.}
	\label{fig:scalingPF} 
\end{figure*}

\subsection{Design of a string stable controller}
\label{sec:stringStabCont}
So far we have discussed situations in which the system scales badly. In this
section we provide a test for the string stability. One of the most common string stability conditions in vehicular platoons is $\left \|\frac{\pos_i(s)}{\pos_{i-1}(s)} \right\|_\infty \leq 1 \quad \forall \,i$, used e.g., in \cite{Milanes2014} (see \cite{Ploeg2014} for
other definitions). In other words, the effect of disturbance at one vehicle must be attenuated  when propagated
along the platoon. However, in a bidirectional platoon the signal can propagate
in both directions.
\begin{definition}[Bidirectional string stability]
	The bidirectional platoon is \textit{string-stable} if for an input
	$\inp_\contNode$ acting at vehicle $\contNode$ the output $\pos_\obsvNode$
	at vehicle $\obsvNode$ satisfies
	\begin{equation}
		\left \|\frac{\pos_\obsvNode(s)}{\pos_{\obsvNode-1}(s)} \right\|_\infty \leq
		1, \, \forall \, \obsvNode \geq \contNode; \,\,
		\left \|\frac{\pos_{\obsvNode-1}(s)}{\pos_{\obsvNode}(s)} \right\|_\infty
		\leq 1, \, \forall \, \obsvNode < \contNode.	
		\label{eq:stringstabbid}
	\end{equation}	
\end{definition}

We can now state a very simple sufficient condition for the bidirectional string
stability, again involving only a norm of the closed loop of an individual agent. The proof is in Appendix \ref{sec:pfTfBelowDc}.
\begin{theorem}
	If $\|\diagTransBlockEigMax(s)\|_\infty = 1$, then
	$\|\diagTransBlockCO(s)\|_\infty = |\diagTransBlockCO(0)|$ and the platoon is
	bidirectionally string-stable.
		\label{thm:bidStringStab}
\end{theorem}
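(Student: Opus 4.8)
The plan is to combine the product form~(\ref{eq:tranFunGen}) with the pointwise magnitude bounds of Lemma~\ref{lem:normsTiZi}, after splitting $\diagTransBlockCO(s)$ into the elementary blocks~(\ref{eq:Zi}) in the one way that makes every block individually dominated by its own value at $s=0$. Concretely, excluding the leader (indices from $2$), the numerator of~(\ref{eq:tranFunGen}) carries $\distanceCO+1$ copies of $\vehNumCoef(s)\contNumCoef(s)$ and the denominator has $\numVeh-1$ factors $\vehDenCoef(s)\contDenCoef(s)+\spatEig_j\vehNumCoef(s)\contNumCoef(s)$. I pair the $\distanceCO+1$ largest of those, $j\in\mathcal{A}:=\{\numVeh-\distanceCO,\dots,\numVeh\}$, with the copies of $\vehNumCoef\contNumCoef$ to form $\prod_{j\in\mathcal{A}}\diagTransBlockEig_j(s)$, and I pair each remaining denominator index $i$ with the numerator factor carrying $\spatEigZ_i$ to form $\prod_{i=2}^{\numVeh-\distanceCO-1}\diagTransBlockEigZ_{ii}(s)$, which gives $\diagTransBlockCO(s)=\weightCO\big(\prod_{j\in\mathcal{A}}\spatEig_j\big)^{-1}\prod_{j\in\mathcal{A}}\diagTransBlockEig_j(s)\prod_{i=2}^{\numVeh-\distanceCO-1}\diagTransBlockEigZ_{ii}(s)$. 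The point of this particular pairing is that, since $\redLaplCO$ is a principal submatrix of $\lapl$ obtained by deleting $\distanceCO+1$ rows and columns, Cauchy interlacing (Lemma~\ref{lem:laplProp}~f)) yields $\spatEigZ_i\ge\spatEig_i$ for every $i$, and all $\spatEig_j$ and $\spatEigZ_i$ are bounded above by $\spatEigMax$.

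Now I invoke the hypothesis. From $\|\diagTransBlockEigMax(s)\|_\infty=1$ we have $|\diagTransBlockEigMax(\jmath\omega)|\le1$ for all $\omega$, so Lemma~\ref{lem:normsTiZi}~b), applied at each $\omega>0$ with $\spatEig_i=\spatEigMax$, gives $|\diagTransBlockEig_j(\jmath\omega)|\le1$ and $\alpha_j:=\spatEig_j\operatorname{Re}\openLoop(\jmath\omega)\ge-\tfrac12$ for every eigenvalue $\spatEig_j\le\spatEigMax$, hence for all $j$, and likewise for every $\spatEigZ_i$. With $\spatEigZ_i\ge\spatEig_i$ and $\alpha_i\ge-\tfrac12$, Lemma~\ref{lem:normsTiZi}~e) gives $|\diagTransBlockEigZ_{ii}(\jmath\omega)|\le|\diagTransBlockEigZ_{ii}(0)|=\spatEigZ_i/\spatEig_i$, the isolated frequencies where $\alpha_i=-\tfrac12$ being covered by continuity in $\omega$ (equivalently, $\spatEig\mapsto|1+\spatEig\openLoop(\jmath\omega)|^2/\spatEig^2$ is strictly decreasing on $(0,\spatEigMax]$ as soon as $\spatEigMax\operatorname{Re}\openLoop(\jmath\omega)\ge-\tfrac12$). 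Multiplying the blocks and using $|\diagTransBlockEig_j(0)|=1$,
\[
|\diagTransBlockCO(\jmath\omega)|\le\weightCO\Big(\prod_{j\in\mathcal{A}}\spatEig_j\Big)^{-1}\prod_{i=2}^{\numVeh-\distanceCO-1}\frac{\spatEigZ_i}{\spatEig_i}=\weightCO\frac{\prod_{i=2}^{\numVeh-\distanceCO-1}\spatEigZ_i}{\prod_{j=2}^{\numVeh}\spatEig_j}=|\diagTransBlockCO(0)|,
\]
the last step being~(\ref{eq:dcGain}). Since this value is attained at $\omega=0$, $\|\diagTransBlockCO(s)\|_\infty=|\diagTransBlockCO(0)|$.

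For the string-stability inequalities~(\ref{eq:stringstabbid}) I would run the same argument on the ratio $\pos_\obsvNode(s)/\pos_{\obsvNode-1}(s)=\diagTransBlockEig_{\contNode,\obsvNode}(s)/\diagTransBlockEig_{\contNode,\obsvNode-1}(s)$ for $\obsvNode>\contNode$ (the upstream case $\obsvNode<\contNode$ is symmetric). Writing both transfer functions with~(\ref{eq:tranFunGen}), the common formation characteristic polynomial cancels; and since $\redLaplCO$ for $\obsvNode$ is a one-vertex-smaller principal submatrix of $\redLaplCO$ for $\obsvNode-1$, their spectra interlace, so the ratio factors once more as $(\gamma^\ast)^{-1}$ times a $\diagTransBlockEig$-block of gain $\gamma^\ast\le\spatEigMax$ times a product of $\diagTransBlockEigZ$-blocks, each again with numerator eigenvalue $\ge$ denominator eigenvalue. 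The steps above then give $\|\pos_\obsvNode(s)/\pos_{\obsvNode-1}(s)\|_\infty=|\diagTransBlockEig_{\contNode,\obsvNode}(0)/\diagTransBlockEig_{\contNode,\obsvNode-1}(0)|$, which by the steady-state formula~(\ref{eq:dcgainplatoon}) of Theorem~\ref{thm:dcgain} equals $1$ when both $\obsvNode,\obsvNode-1\ge\contNode$ (the two steady-state gains coincide there) and is $<1$ in the symmetric upstream case; this is exactly~(\ref{eq:stringstabbid}).

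The main obstacle is the bookkeeping in the factorization: one must pick the pairing of numerator against denominator factors so that every $\diagTransBlockEigZ$-block lands in case~e) of Lemma~\ref{lem:normsTiZi} — magnitude bounded \emph{above} by its $s=0$ value, not below — and the only reason such a pairing exists at all is the eigenvalue interlacing between $\lapl$ and its principal submatrix $\redLaplCO$ (resp. between the two path-submatrices in the string-stability step). The secondary technical point is the boundary frequencies where $\alpha_j=-\tfrac12$, at which Lemma~\ref{lem:normsTiZi}~e) is stated with a strict inequality; continuity in $\omega$, or the explicit monotonicity of $|1+\spatEig\openLoop(\jmath\omega)|^2/\spatEig^2$ noted above, closes that gap.
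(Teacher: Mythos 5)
Your proposal is correct and follows essentially the same route as the paper's own proof: the same factorization of $\diagTransBlockCO(s)$ into $\diagTransBlockEig_j$- and $\diagTransBlockEigZij$-blocks, the same use of Lemma~\ref{lem:normsTiZi}~b) and~e) driven by the interlacing property of Lemma~\ref{lem:laplProp}~f), and the same reduction of the string-stability ratios to interlacing between the two path-submatrices. Your explicit choice of which eigenvalues form the $\diagTransBlockEig_j$-blocks and your remark on the boundary case $\alpha_j=-\tfrac12$ are harmless refinements of the argument in Appendix~\ref{sec:pfTfBelowDc}.
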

The first part states that the $\hinfnorm$ norm of $\tranFunCo(s)$ equals its
steady-state gain (which is only a function of the
interconnection structure). If $\spatEigMax$ is independent of $\numVeh$, the
bidirectional string stability holds for all $\numVeh$, all $\wb_i$ and for
every $\diagTransBlockCO(s)$.

The condition $\|\diagTransBlockEigMax(s)\|_\infty = 1$ provides a simple way
how to tune a SISO controller for a vehicle model $\vehicleTf(s)$ in a platoon
of arbitrary size. To achieve $\|\diagTransBlockEigMax(s)\|_\infty = 1$, there
must be at most one integrator in the open loop. Systems with one integrator
in the open loop were used in \cite{Barooah2009a, Lin2012}, despite the fact
that they cannot track the leader's position. This is usually overcome using leader's velocity as the
reference velocity. However, this is a \emph{centralized information} and the
leader's velocity needs to be broadcast perpetually, which requires a
communications infrastructure.

\subsection{Design of a predecessor following controller} 
\label{sec:prefFolDesign}
For a platoon with uniformly bounded eigenvalues it follows from Theorem \ref{thm:scalingWithDistance} that $\| \diagTransBlockEigMin \|_\infty = 1$ is
necessary for string stability. Denote a standard closed-loop as $\diagTransBlockEig(s)=\openLoop(s)/(1+\openLoop(s))$.
\begin{lemma}
	If there is a bidirectionally-string-stable asymmetric control for a given
	$\vehicleTf(s)$, then there always exists a predecessor following controller
	($\wb=0$) achieving $\|\diagTransBlockEig(s) \|_\infty= 1$.
\end{lemma}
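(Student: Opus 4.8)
The plan is to observe that the single‑agent closed loop $\diagTransBlockEigMin(s)$ forced upon us by the assumed asymmetric design is itself realizable as a predecessor‑following closed loop, obtained by folding the scalar $\spatEigMin$ into the controller gain. So the whole lemma reduces to a gain rescaling once two facts are in place: (i) string stability of the asymmetric scheme pins $\|\diagTransBlockEigMin(s)\|_\infty$ to exactly $1$, and (ii) for $\wb=0$ the relevant closed loop is precisely the complementary sensitivity of the agent open loop.

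First I would make the hypothesis precise. Existence of a bidirectionally‑string‑stable asymmetric control for $\vehicleTf(s)$ means there are a SISO controller $\controllerTf(s)$ and asymmetry weights $0\le\wb_i\le\wbmax<1$ such that, for every $\numVeh$, the platoon is asymptotically stable and satisfies (\ref{eq:stringstabbid}). By Lemma \ref{lem:laplProp} e) the nonzero eigenvalues of $\lapl$ then admit a uniform lower bound $\spatEigMin>0$, so $\diagTransBlockEigMin(s)=\spatEigMin\openLoop(s)/(1+\spatEigMin\openLoop(s))$ with $\openLoop=\controllerTf\vehicleTf$ is a well‑defined, $\numVeh$‑independent transfer function. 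Since $\numInteg\ge 1$ forces $|\diagTransBlockEigMin(0)|=1$, we have $\|\diagTransBlockEigMin(s)\|_\infty\ge 1$; if the inequality were strict, Theorem \ref{thm:scalingWithDistance} would make $\|\diagTransBlockCO(s)\|_\infty$ grow like $\zeta^{\distanceCO}$ with $\zeta>1$ (and $\diagTransBlockCO(0)\ge 1$ for $\contNode\le\obsvNode$ by Theorem \ref{thm:dcgain}), which is incompatible with (\ref{eq:stringstabbid}). Hence $\|\diagTransBlockEigMin(s)\|_\infty=1$ — this is exactly the necessity observation recorded above.

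Next I would exhibit the controller. Put $\controllerTf_{\mathrm{pf}}(s)=\spatEigMin\,\controllerTf(s)$: this is an admissible SISO controller (a positive gain times $\controllerTf$) with the same type number $\numInteg$, and its open loop is $\openLoop_{\mathrm{pf}}(s)=\controllerTf_{\mathrm{pf}}(s)\vehicleTf(s)=\spatEigMin\openLoop(s)$. For $\wb=0$ the reduced Laplacian $\redLapl$, and every principal submatrix $\redLaplCO$, is lower bidiagonal with unit diagonal, so all its nonzero eigenvalues equal $1$; therefore in (\ref{eq:tranFunGen}) every factor $\diagTransBlockEigZij(s)$ collapses to $1$ and the standard closed loop of the predecessor‑following agent is $\diagTransBlockEig(s)=\openLoop_{\mathrm{pf}}(s)/(1+\openLoop_{\mathrm{pf}}(s))=\spatEigMin\openLoop(s)/(1+\spatEigMin\openLoop(s))=\diagTransBlockEigMin(s)$. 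Consequently $\|\diagTransBlockEig(s)\|_\infty=\|\diagTransBlockEigMin(s)\|_\infty=1$, which is the claim. Asymptotic stability of the resulting platoon is inherited for free: its characteristic polynomial is $\vehDenCoef(s)\contDenCoef(s)+\spatEigMin\vehNumCoef(s)\contNumCoef(s)$, Hurwitz under the standing assumption that $\vehDenCoef\contDenCoef+\spatEig\,\vehNumCoef\contNumCoef$ is stable for all $\spatEig\in[\spatEigMin,\spatEigMax]$; and, by Theorem \ref{thm:bidStringStab} applied to $\diagTransBlockEigMax(s)=\diagTransBlockEig(s)$, this predecessor‑following platoon is itself bidirectionally string‑stable.

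Once assembled, the argument is essentially the one‑line substitution $\controllerTf\mapsto\spatEigMin\controllerTf$, so there is no serious computational obstacle. The point that needs the most care is the first paragraph: arguing that "asymmetric string‑stable control'' genuinely delivers a \emph{uniform} $\spatEigMin>0$ (so that $\diagTransBlockEigMin$ is meaningful independently of $\numVeh$), and that bidirectional string stability in the sense of (\ref{eq:stringstabbid}) is incompatible with the exponential lower bound of Theorem \ref{thm:scalingWithDistance}, thereby forcing $\|\diagTransBlockEigMin(s)\|_\infty$ to equal $1$ and not merely to be finite. Everything after that is routine.
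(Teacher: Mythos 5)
Your proposal is correct and follows essentially the same route as the paper: the paper's (very terse) argument is precisely that $\|\diagTransBlockEigMin(s)\|_\infty=1$ is forced by Theorem \ref{thm:scalingWithDistance} for a string-stable uniformly-bounded platoon, and that the predecessor-following controller is obtained by decreasing the controller gain to $\spatEigMin$ so that $\diagTransBlockEig(s)=\diagTransBlockEigMin(s)$. Your write-up merely makes explicit the steps the paper leaves implicit (uniformity of $\spatEigMin$ via Lemma \ref{lem:laplProp}~e), collapse of the $\diagTransBlockEigZij$ factors for $\wb=0$, and inherited stability), all of which check out.
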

As an example of the closed loop, take $\diagTransBlockEig(s) =
\diagTransBlockEigMin(s)$ since $\| \diagTransBlockEigMin(s) \|_\infty=1$---the gain of the controller was just decreased to $\spatEigMin$. Since such a system
might have a slow transient response, the controller can be redesigned. 
 
The simulation results are in Fig. \ref{fig:scalingPF}. We designed two
controllers for the system  model $\vehicleTf(s)=\frac{1}{s^2+0.5s}$. The
controller $\controllerTf_1(s)=\frac{2.4s+1}{0.125s+1}$ achieves $\| \diagTransBlockEig(s)
\|_\infty = 1 $ for predecessor following (PF). In addition to that, it also has
a positive impulse response, which is very useful in platoon control. Both properties together guarantee string stability for PF in
$\mathcal{L}_\infty$-induced norm \cite{Eyre1998}. The necessary conditions for positive
response are dominant real pole and no real zero right from this pole 
\cite{Darbha2003}. The controller
$\controllerTf_2(s)=1.5$ is a simple proportional controller. A controller with
a lower gain was used in \cite{Barooah2009a}. It is apparent from Fig.
\ref{fig:scalingPF} that for the same maximal control effort, the PF achieves
the best transient response among the cases shown.

Although in general we cannot guarantee better transients of PF compared to asymmetric bidirectional control, we think that PF offers many advantages: 1)
no need for a rear-distance sensor, 2) developed theory for a closed-loop controller design (e. g.,
$\hinfnorm$ approach), 3) easier handling of heterogeneity, 4) faster convergence time for the same
maximal control effort---with the same controller the PF has a larger spectral
gap (larger $\spatEigMin$). The performance could then be compared by
simulations. Note that although the PF can have a better transient, a bidirectional
architecture might still be required, e.g., for safety reasons. Then Theorem
\ref{thm:bidStringStab} gives a condition for design.

\section{Conclusion}
We investigated asymmetric control of vehicular platoons where proportional asymmetry is used---the front spacing error is proportional to the rear spacing error. First we analyzed scaling of steady-state gain of an arbitrary
transfer function in a platoon. It was proved that it grows without bound with $\numVeh$ for a
symmetric bidirectional control scheme, while it stays bounded in a presence of
asymmetry. We proved that for more than one integrator in the open loop, the
asymmetric bidirectional control is not scalable, because the $\hinfnorm$ norm
of any transfer function grows exponentially with the graph distance. If we allow
the vehicles to know the leader's velocity (which requires
permanent communication), only one integrator in the open loop can be present.
Then we provide a simple design method for tuning the controller to achieve bidirectional string stability. In
this case also a string-stable predecessor following controller can always be
designed. This paper thus gave an overview of the achievable performance in bidirectional control with proportional asymmetry.

\appendices
\section{}
\label{sec:pfDcGain}
\begin{proof}[Proof of Theorem \ref{thm:dcgain}]
As stated in Sec. II.B., we will work with $\redLapl=[l_{ij}]$. We begin by
calculating the product in the denominator of (\ref{eq:dcGain}). The product of
all $\spatEig_i$'s equals $\det \redLapl$.
	The recursive rule to calculate the determinant of tridiagonal matrix is
	\cite[Lem. 0.9.10]{Horn1996}
$
		D_{n} = l_{n,n} D_{n-1} - l_{n, n+1} l_{n+1, n} D_{n-2}, \label{eq:recRuleDet}
$
	where $D_n$ is the determinant of the submatrix of size $n$. We begin from
	bottom right corner of $\redLapl$. Then $D_1=1$ (the
	bottom-right element) and $D_2=1$.
	Then $D_3$ can be calculated as $D_3 = (1+\wb_{N-2})D_2 - \wb_{N-2}D_1 = 1$.
	By induction, the determinant of $\redLapl$ is $
		\det \redLapl= \prod_{j=2}^\numVeh \spatEig_j = 1$ for any size of
		$\redLapl$.

	Now we calculate the product in the numerator of (\ref{eq:dcGain}). It equals the
	determinant of $\redLaplCO$.
	Suppose that $\contNode \leq \obsvNode$. If $\obsvNode < \contNode$, then the indices $\contNode$ and
	$\obsvNode$ are swapped and only the weight of the path is different.
	The matrix $\redLaplCO$ reads $\redLaplCO = \text{diag}(\lapl_1, \lapl_2)$
	with 
	
	{\small
	\begin{equation}
		\lapl_1 = \left[
						\begin{matrix}
							1+\wb_2 & -\wb_2 & 0 &  .. & 0 \\
							-1 & 1+\wb_3 & -\wb_3 & \ldots & 0 \\
							\vdots & \vdots & \vdots & \ddots & \vdots \\
							0 & \ldots & 0 & -1&  1+\wb_{\contNode-1} 
						\end{matrix} 
					\right].
	\end{equation}}
	The matrix $\lapl_2$ has the same structure as $\redLapl$, hence $\det
	\lapl_2=1$. The dimensions are $\lapl_1 \in \mathbb{R}^{(\contNode-2) \times
	(\contNode-2)}\,\text{and}\, \lapl_2 \in \mathbb{R}^{(N-\obsvNode-1) \times (N-\obsvNode-1)}$. 
	
The determinant of $\lapl_1$ of size $n\times n$ can be
recursively calculated as
		$\det \lapl_{1,n} = (1+\wb_n) \det \lapl_{1,n-1} -
		\wb_{n}\det \lapl_{1, n-2}.$
Let us start from the bottom right corner again. Then $\det
\lapl_{1,1}=1+\wb_{\contNode-1}$ and $\det \lapl_{1,2} = 1+ \wb_{\contNode-1} +
\wb_{\contNode-1} \wb_{\contNode-2}$.
The determinant 
\begin{IEEEeqnarray}{rCl}
	\det \lapl_{1,3} &=& (1+\wb_{\contNode-3})\det \lapl_{1,2} -
	\wb_{\contNode-3}\det \lapl_{1,1} \nonumber \\
&=&1+ \wb_{\contNode-1} + \wb_{\contNode-1} \wb_{\contNode-2} +
	\wb_{\contNode-1}
	\wb_{\contNode-2}\wb_{\contNode-3}.
\end{IEEEeqnarray}
The pattern is now apparent and the determinant of $\lapl_1$ is
$
	\det \lapl_1 = 1+\sum_{i=1}^{\contNode-2} \prod_{j=1}^{i}
	\wb_{\contNode-j}.
$
The sum goes from 1 to $\contNode-2$ because we excluded the leader from the
formation and the vehicle $\contNode$ is part of the path from $\contNode$
to $\obsvNode$, so $\contNode-2$ vehicles remain. Since $\det \redLaplCO =
\det \lapl_1 \det \lapl_2$, the steady state gain is then
	$\diagTransBlockCO(0) = \weightCO \frac{\det \lapl_1 \det
		\lapl_2 }{\det \redLapl} 
		= \weightCO \left(1+\sum_{i=1}^{\contNode-2}
		\prod_{j=1}^{i} \wb_{\contNode-j} \right).$
\end{proof}

\section{}
\label{sec:pfTiZi}
\begin{proof}[Proof of Lemma \ref{lem:normsTiZi}]
	\ Proof of a): The
	proof can be found as a part of the proof of \cite[Thm.
	3]{Herman2013b}. It also follows from the proof that
	$|\diagTransBlockEig_{i}(\jmath \omega_0)|>1 \Leftrightarrow \alpha <
	-1/2$.
	 
	Proof of  b) follows from a). Suppose that
	$|\diagTransBlockEigWn(\jmath \omega_0)| >1$ for $\spatEig_j < \spatEig_i$.
	Then by a) also $|\diagTransBlockEig_{i}(\jmath \omega_0)| >1$, which
	contradicts the assumption $|\diagTransBlockEig_{i}(\jmath
	\omega)|\leq 1$.
	Hence, $|\diagTransBlockEigWn(\jmath \omega_0)| \leq 1$.

	Proof of statements c)-e):	
	The transfer function $\diagTransBlockEigZij(s)$ can be written as
	$\diagTransBlockEigZij(s)\! =\! \frac{1+\spatEigZ_i
		\openLoop(s)}{1+\spatEig_j \openLoop(s)}$. 
	Its squared modulus at $\omega_0$ is using 
	$\kappa_{ij}\!=\!\frac{\spatEigZ_i}{\spatEig_j}$ given as
	\begin{IEEEeqnarray}{rCl}
	&&|\diagTransBlockEigZij(\jmath \omega_0)|^2 =
	\left|\frac{1+\kappa_{ij}(\alpha_j+\jmath \beta_j)}{1+(\alpha_j+\jmath \beta_j)}\right|^2 
	\nonumber 
	\\
		&&= \kappa_{ij}^2
	\left[1\!+\!\frac{\left(\frac{1}{\kappa_{ij}}-1\right)\left(2\alpha_j+1+\frac{1}{\kappa_{ij}}\right)}{(\alpha_j
	+ 1)^2 + \beta_j^2}\right].
	\label{eq:modulusZi} 
	\end{IEEEeqnarray}
Denote the numerator
$m_{ij}=\left(\frac{1}{\kappa_{ij}}-1\right)\left(2\alpha_j+1+\frac{1}{\kappa_{ij}}\right)$.
The square of the steady-state gain is $|\diagTransBlockEigZij(0)|^2=\kappa_{ij}^2$. 
If $m_{ij} > 0$, then $|\diagTransBlockEigZij(\jmath \omega_0)|^2 > |\diagTransBlockEigZij(0)|^2=
\kappa_{ij}^2$ since $(\alpha_j
	+ 1)^2 + \beta_j^2> 0$.
If $m_{ij}\leq 0$, then $|\diagTransBlockEigZij(\jmath \omega_0)|^2 \leq
|\diagTransBlockEigZij(0)|^2$.
	Let us analyze the statements c)-e).
	
	c) If $\,\alpha_j \leq -1$ and
	$\spatEigZ_i \geq \spatEig_j$, then $\left(\frac{1}{\kappa_{ij}}-1\right) \leq
	0$ and also $\left(2\alpha_j+1+\frac{1}{\kappa_{ij}}\right) \leq 0$, hence $m_{ij} \geq 0$
	which proves the statement c).
d) If $-1 < \, \alpha_j \leq -\frac{1}{2} \text{ and } \spatEigZ_i \leq
		\spatEig_j$, so $\kappa_{ij} \leq 1$, then
		$\left(\frac{1}{\kappa_{ij}}-1\right) \geq 0$ and also
		$\left(2\alpha_j+1+\frac{1}{\kappa_{ij}}\right) \geq 0$,  $m_{ij} > 0$
		and d) is proved.	
	e) If $\, \alpha_j\!>\!-\frac{1}{2} \text{ and } \spatEigZ_i \geq
	\spatEig_j$, then $\left(\frac{1}{\kappa_{ij}}-1\right) \leq 0$ and
	$\left(2\alpha_j+1+\frac{1}{\kappa_{ij}}\right)\!\geq\!0$, hence $m_{ij} \!\leq\! 0$ and e) is proved.
\end{proof}

\section{}
\label{sec:pfScalingWithDistance}
\begin{proof}[Proof of Theorem \ref{thm:scalingWithDistance}]
In the proof we work with reduced Laplacian $\redLapl$. Let $\omega_0$ be
a frequency at which $|\diagTransBlockEigMin(\jmath \omega_0)|>1$. The key idea
 is to form $\diagTransBlockEigWn(s)$ and $\diagTransBlockEigZij(s)$
from (\ref{eq:tranFunGen}) as follows:
\begin{enumerate}
  \item Take each term $\vehDenCoef(s) \contDenCoef(s) +
\spatEig_j \vehNumCoef(s)\contNumCoef(s)$ from the denominator of
(\ref{eq:tranFunGen}). Let $\alpha_j+\jmath \beta_j =
\spatEig_j \openLoop(\jmath \omega_0)$. Since $|\diagTransBlockEigMin(\jmath
\omega_0)| > 1$, from Lemma \ref{lem:normsTiZi} a) we
know that $\alpha_j < -\frac{1}{2}$. 
\item If $\alpha_j \leq -1$, then find $\spatEigZ_i$
	such that $\spatEigZ_i \geq \spatEig_j$. Form
	$\diagTransBlockEigZij(s)$ using such $\spatEigZ_i$ and $\spatEig_j$. Then by
	c) in Lemma \ref{lem:normsTiZi} for such $\diagTransBlockEigZij(s)$ holds
	$|\diagTransBlockEigZij(\jmath \omega_0)| \geq |\diagTransBlockEigZij(0)|$.
	\item If $-1 < \alpha_j \leq -\frac{1}{2}$, then find $\spatEigZ_i$
	such that $\spatEigZ_i \leq \spatEig_j$. Form $\diagTransBlockEigZij(s)$ using
	these $\spatEigZ_i$ and $\spatEig_j$. Then by Lemma \ref{lem:normsTiZi} d)
	$|\diagTransBlockEigZij(\jmath \omega_0)| \geq |\diagTransBlockEigZij(0)|$.
	\item Form as much $\diagTransBlockEigZij(s)$'s as possible using the
	steps 2) and 3). Use $(\distanceCO+1)$ remaining terms $\vehDenCoef(s)
	\contDenCoef(s) + \spatEig_j \vehNumCoef(s)\contNumCoef(s)$ to form $\diagTransBlockEigWn(s)$.
\end{enumerate}
	
	\begin{figure}
\centering
	\includegraphics[width=0.28\textwidth]{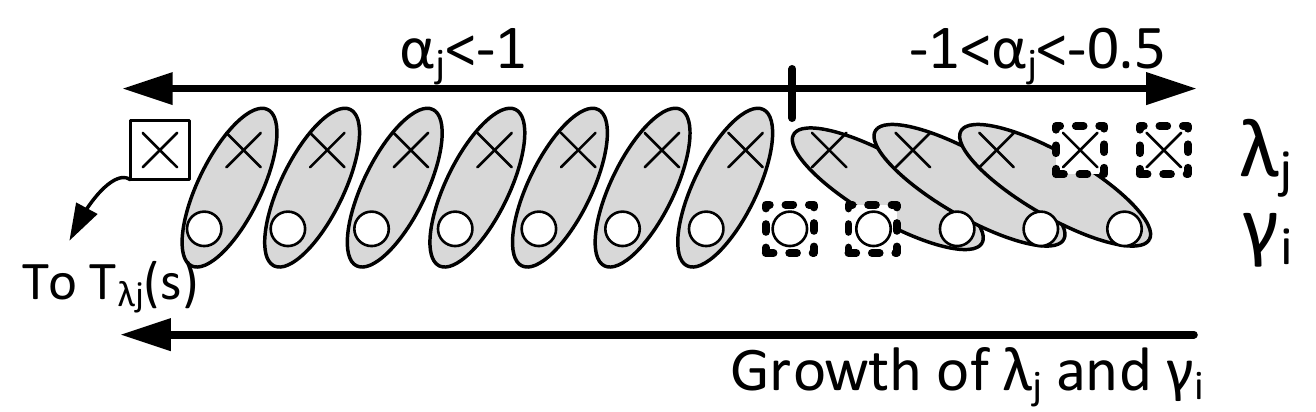}
	\caption{Matching of $\spatEig_j$ and $\spatEigZ_{i}$ to form
	$\diagTransBlockEigZij(s)$.
	Dashed pairs are the two $\diagTransBlockEigZij(s)$ for which
	$|\diagTransBlockEigZij(\jmath \omega_0)|>1$ is not guaranteed.}
	\label{fig:poleZeroMatching}
\end{figure} 
Lemma
\ref{lem:laplProp} f) allows us to find $(\numVeh-\distanceCO-3)$
$\diagTransBlockEigZij(s)$'s to satisfy either c) or d) in Lemma
\ref{lem:normsTiZi} --- we pair $\spatEigZ_i$ with $\spatEig_{i+2}$ for
$\alpha_j \leq -1$ and $\spatEigZ_i$ with $\spatEig_i$ for $-1 < \alpha_j \leq
0.5$ (see Fig. \ref{fig:poleZeroMatching}).
These $\diagTransBlockEigZij(s)$'s all have gain greater than one at $\omega_0$. The remaining two $\diagTransBlockEigZij(s)$'s might
have gain less than one. Since $\spatEig_j$ and $\spatEigZ_i$ are bounded, there
is a lower bound $\xi$ such that $|\diagTransBlockEigZij(\jmath \omega)|\geq
\xi$ for these two.

The transfer function $\tranFunCo(s)$ given in
(\ref{eq:tranFunGen}) is using such $\diagTransBlockEigWn$'s and
$\diagTransBlockEigZij$'s written as
	\begin{equation}
		\diagTransBlockCO(s) =
		\weightCO \,\, \displaystyle{\prod_{i=2, j \in
		\setZi}^{\numVeh-\distanceCO-1}}\,\, \diagTransBlockEigZij(s)	
		\prod_{j=2,j \notin \setZi}^{\numVeh} \frac{1}{\spatEig_j}
		\prod_{j=2, j \notin \setZi}^{\numVeh}\diagTransBlockEigWn(s).
		\label{eq:tranFunGen2}
	\end{equation}
The set $\mathcal{J}$ is the set of $\spatEig_j$ used to form some of
$\diagTransBlockEigZij$'s. The terms $\weightCO \prod_{j=2, j \notin
\setZi}^{\numVeh} \frac{1}{\spatEig_j}$ and steady-state gain of
$\diagTransBlockEigZij(0)$ do not affect the shape of the magnitude
frequency response, only its value.
	
Since $\|\diagTransBlockEigMin(s)\|_\infty > 1$, it follows from
a) in Lemma \ref{lem:normsTiZi}  that for all transfer functions
$\diagTransBlockEigWn(s)$ we have $|\diagTransBlockEigWn(\jmath \omega_0)| > 1$. 
Due to the lower and upper bounds on eigenvalues, there is a minimum $\zeta>1$
of modulus frequency response $|\diagTransBlockEigWn(\jmath
\omega_0)|$, attained for some $\spatEig_j$ with $\spatEigMin \leq \spatEig_j
\leq \spatEigMax$. Then we get the lower bound
on the modulus of product of $\diagTransBlockEigWn(s)$ in (\ref{eq:tranFunGen2})
as
$
	\prod_{j=2, j \notin \setZi}^{\numVeh} |\diagTransBlockEigWn(\jmath \omega_0)|
	\geq \zeta^{\distanceCO+1}.
$
Clearly, this part of (\ref{eq:tranFunGen2}) scales exponentially with
$\distanceCO$.

All but two blocks $\diagTransBlockEigZij(s)$ amplify at $\omega_0$, so 
$	\prod_{i=1}^{\numVeh-\distanceCO-1} |\diagTransBlockEigZij(\jmath \omega_0)|
	\geq \xi^2$ (excluding the
steady-state gain)
and the norm of $\diagTransBlockCO(s)$ is from
(\ref{eq:tranFunGen2})
$
	\|\diagTransBlockCO(s)\|_\infty \geq \xi^2 \,
		\diagTransBlockCO(0) \, \zeta^{\distanceCO}.
$
\end{proof}
\section{}
\label{sec:pfTfBelowDc}
\begin{proof}[Proof of Theorem \ref{thm:bidStringStab}]
	First we prove that if $\|\diagTransBlockEigMax(s)\|_\infty = 1$, then
	$\|\diagTransBlockCO(s)\|_\infty = |\diagTransBlockCO(0)|$. As in the proof of
	Theorem \ref{thm:scalingWithDistance}, we will form $\diagTransBlockEigZij$'s and $\diagTransBlockEigWn$'s in a suitable way.	
Let $\alpha_j+\jmath \beta_j = \spatEig_j \openLoop(\jmath \omega_0)$ at
some frequency $\omega_0$. Since $\|\diagTransBlockEigMax(s)
\|_\infty = 1$, it follows from Lemma \ref{lem:normsTiZi} b)
 that $|\diagTransBlockEigWn(\jmath \omega_0)| \leq 1\, \forall \omega_0,\,
 \forall \spatEig_j \leq \spatEigMax$ and $\alpha_j \geq
 -\frac{1}{2}, \, \forall \omega_0$.
	 
Using Lemma \ref{lem:laplProp} f) we can pair all $\spatEigZ_i$ with
unique $\spatEig_j$ such that $\spatEigZ_i \geq \spatEig_j$ to form
$\diagTransBlockEigZij(s)$. Then e) in Lemma \ref{lem:normsTiZi} implies that
$|\diagTransBlockEigZij(\jmath \omega_0)| \leq |\diagTransBlockEigZij(0)|$ for all $i,j$.
Since $\alpha_j \geq -\frac{1}{2}$ for all $\omega_0$, we have that
$\|\diagTransBlockEigZij(s)\|_\infty = |\diagTransBlockEigZij(0)|$ for all
pairs $\spatEigZ_i \geq \spatEig_j$. All remaining terms
$\diagTransBlockEigWn(s)$ in (\ref{eq:tranFunGen2}) by Lemma
\ref{lem:normsTiZi}b) satisfy $|\diagTransBlockEigWn(\jmath
\omega_0)|\leq 1$ for all $\omega_0$.
Hence, all transfer functions in the product (\ref{eq:tranFunGen2}) have their
norm norm less than or equal to one and $\|\diagTransBlockCO(s) \|_\infty =
|\diagTransBlockCO(0)|$ .

	Now let us go back to bidirectional string stability. Consider $\obsvNode \geq
	\contNode$ and let $\inp_\contNode$ be the input at the control node.
	Then the first transfer function in (\ref{eq:stringstabbid}) can be written as
	\begin{multline}
		\frac{\pos_\obsvNode(s)}{\pos_{\obsvNode-1}(s)} = \frac{\inp_\contNode(s)
		\diagTransBlockEig_{\contNode, \obsvNode}(s)}{\inp_\contNode(s)
		\diagTransBlockEig_{\contNode, \obsvNode-1}(s)} = \frac{
		\diagTransBlockEig_{\contNode, \obsvNode}(s)}{
		\diagTransBlockEig_{\contNode, \obsvNode-1}(s)} 
\\=
		\frac{\vehNumCoef(s)
		\contNumCoef(s)\prod_{j=1}^{N-\distanceCO-1}\vehDenCoef(s)\contDenCoef(s)
		+ \spatEigZ_{j,
		\obsvNode}
		\,\vehNumCoef(s)\contNumCoef(s)}{\prod_{j=1}^{N-\distanceCO}\vehDenCoef(s)\contDenCoef(s)
		+ \spatEigZ_{j, \obsvNode-1}\,\vehNumCoef(s)\contNumCoef(s)}.
		\label{eq:prodZerosXn}
	\end{multline}
	 Let $\redLaplCO_{\obsvNode-1}$ and
	$\redLaplCO_{\obsvNode}$ be the submatrices of $\lapl$ corresponding to the paths from $\contNode$ to $\obsvNode-1$ and from $\contNode$ to $\obsvNode$, respectively. 
Their eigenvalues are $\spatEigZ_{j, \obsvNode-1}$ and $\spatEigZ_{j,
\obsvNode}$, respectively. Beacuse of the fact that $\redLaplCO_{\obsvNode}$ is
a submatrix of $\redLaplCO_{\obsvNode-1}$, the eigenvalues of
$\redLaplCO_{\obsvNode-1}$ and $\redLaplCO_{\obsvNode}$ must interlace in a
sense of f) in Lemma \ref{lem:laplProp}.
We can pair $\spatEigZ_{j, \obsvNode-1}$ and $\spatEigZ_{j, \obsvNode}$ by Lemma
\ref{lem:laplProp} f) such that $\spatEigZ_{j, \obsvNode-1} \leq \spatEigZ_{j,
\obsvNode}$ and form $\diagTransBlockEigZij(s)$ as above. Then,
	$
		\left \|\frac{\vehDenCoef(s)\contDenCoef(s) +
		\spatEigZ_{j, \obsvNode}\vehNumCoef(s)\contNumCoef(s)}{\vehDenCoef(s)\contDenCoef(s) + \spatEigZ_{j,
		\obsvNode-1}\vehNumCoef(s)\contNumCoef(s)} \right\|_\infty \leq 1 \, \forall
		j.
	$
	Only one term in (\ref{eq:prodZerosXn}) with a
	form $\frac{\vehNumCoef(s)\contNumCoef(s)}{\vehDenCoef(s)\contDenCoef(s) +
	\spatEigZ_{i, \obsvNode-1}\vehNumCoef(s)\contNumCoef(s)}$ remains. Its
	$\hinfnorm$ norm is less than or equal to one by b) in Lemma
	\ref{lem:normsTiZi}.
	The steady-state gain of $\frac{\pos_\obsvNode(s)}{\pos_{\obsvNode-1}(s)}$ is one, since by Theorem \ref{thm:dcgain} the steady-state gain is identical for all the vehicles behind the
	control node.
	Hence, $\|\frac{\pos_\obsvNode(s)}{\pos_{\obsvNode-1}(s)}\|_\infty \leq 1$ for
	$\contNode \leq \obsvNode$.
	
	The other direction ($\contNode \geq \obsvNode$) has the ratio of outputs 
	with the same structure as (\ref{eq:prodZerosXn}), the only difference is its
	steady-state gain. It follows from (\ref{eq:dcgainplatoon}) that the steady-state gain is
	\begin{IEEEeqnarray}{rCl}
		&&\frac{\diagTransBlockEig_{\contNode, \obsvNode-1}(0)}{
		\diagTransBlockEig_{\contNode, \obsvNode}(0)} \!=\!  \wb_{\obsvNode-1} \frac{\left(1+\sum_{i=1}^{\obsvNode-3}
		\prod_{j=1}^{i} \wb_{\obsvNode-j-1} \right)}{\left(1+\sum_{i=1}^{\obsvNode-2}
		\prod_{j=1}^{i} \wb_{\obsvNode-j} \right)} 
		 < 1.
	\end{IEEEeqnarray}
	Since the norm $\|{\pos_{\obsvNode-1}(s)}/{\pos_{\obsvNode}(s)}\|_\infty$ is
	at most 1, bidirectional string stability was proved.
\end{proof}
\bibliographystyle{IEEEtran}  
\bibliography{Papers-MistunedControlDrawbacks}

\end{document}